\documentclass[twoside, 12pt, fleqn]{amsart}
\usepackage{booktabs,amsmath,amsfonts,amssymb,graphicx,bm}
\usepackage{bbold}
\usepackage{verbatim}
\usepackage[authoryear]{natbib}
\usepackage{relsize}
\usepackage{times}
\usepackage{dsfont}
\usepackage{subfig}
\usepackage{etoolbox}
\usepackage{float} 
\usepackage{epigraph}
\usepackage{ushort}
\usepackage{hyperref}
\usepackage{xcolor}
\usepackage{bbm}
\usepackage{tikz}
\usepackage[utf8]{inputenc}
\usepackage[T1]{fontenc}
\usetikzlibrary{snakes,arrows,calc, patterns, positioning, shapes.geometric, decorations.pathreplacing,decorations.markings}
\usepackage{relsize}
\usepackage{pgfplots}
\usepackage{nccmath}
\usepgfplotslibrary{fillbetween}
\DeclareMathOperator{\sgn}{sign}
\usepackage{titlesec}
\titleformat{\section}
{\color{gray!100}\sffamily\large}
{\color{gray!100}\thesection.}{.5em}{}
\titleformat{\subsection}[runin] 
{\color{gray!100}\normalfont\rmfamily\itshape}
{\color{gray!100}\thesubsection.}{.5em}{}[.]
\titleformat{\subsubsection}[runin] 
{\normalfont\rmfamily\itshape}
{\thesubsubsection.}{.5em}{}[.]

\hypersetup{
  colorlinks   = true, 
  urlcolor     = blue, 
  linkcolor    = blue, 
  citecolor   = blue 
}
\makeatletter
\def\@settitle{\begin{center}%
  \baselineskip14\p@\relax
    \bfseries
    \normalfont\Large\textbf
  \@title
  \end{center}%
}
\vfuzz2pt 
\hfuzz2pt 
\usepackage[left=1.25in, right=1.25in, top=1.25in]{geometry}
\usepackage{setspace}
\usepackage[margin=10pt,font=small,labelfont=bf,
labelsep=period]{caption}
\newtheorem{theorem}{Theorem}

\newtheorem{claim}{Claim}
\newtheorem{proposition}{Proposition}
\newtheorem{definition}{Definition}
\newtheorem{obs}{Observation}
\theoremstyle{definition}

\newcommand{\R}{\mathbbm{R}}

\def \qed {\hfill \vrule height6pt width 6pt depth 0pt}

\def\s{\sigma  }

\def\e{\epsilon  }

\parskip 0.7em
\parindent 0em
\usepackage{mathtools}

\begin{document}

\thispagestyle{empty}
\centerline{\color{gray!100}\sffamily\large\textbf{Strategic Observational Learning}}

\vspace*{0.2in}

\centerline{Dimitri Migrow\footnote{Migrow: University of Edinburgh, School of Economics. I thank Debraj Ray for many discussions going back to 2013, when we started work on this problem together. The initial conceptualization of the project in its present form and some of the arguments in their present form are developed jointly with him. For further helpful comments and suggestions I thank Pablo Beker, Matthew Elliott, Mira Frick, Daniel Garrett, Olivier Gossner, Daniel Gottlieb, Peter J. Hammond, Alexander M. Jakobsen, Sam Kapon, Aditya Kuvalekar, Gilat Levy, John Moore, Alessandro Pavan, Marcin Peski, Herakles Polemarchakis, Fedor Sandomirskiy, Aidan Smith, Philipp Strack, Dan Quigley, Colin Stewart, Andy Zapechelnyuk and Weijie Zhong.}}

\vspace*{0.2in}

\centerline{\footnotesize{January 2023}}

\textbf{Abstract}. We study learning by privately informed forward-looking agents in a simple repeated-action setting of social learning. Under a symmetric signal structure, forward-looking agents behave myopically for any degrees of patience. Myopic equilibrium is unique in the class of symmetric threshold strategies, and the simplest symmetric non-monotonic strategies. If the signal structure is asymmetric and the game is infinite, there is no equilibrium in myopic strategies, for any positive degree of patience. \\

\textbf{JEL}: C72, C73, D82, D83.

\renewcommand*{\thefootnote}{\arabic{footnote}}

\section{Introduction}

Learning by observing the choices of others is an important force that shapes societal outcomes. Such learning is ubiquitous: we learn, and form our opinions by observing which occupations others choose, which fashion trends they follow, which new technologies they adopt and which they dismiss. 
In many of such contexts, choices are made repeatedly creating a feedback loop in the learning process. We make choices that depend on observed actions of other economic agents. At the same time, our choices impact beliefs and future actions of the others which in turn determines the amount of information that we can elicit from others in the future.


One of the key questions in social learning is, how do forward-looking agents learn in settings where actions are taken repeatedly? We still, however, lack an explicit characterization of learning behavior by forward-looking agents in such settings.\footnote{Most of the literature on social learning in repeated-action settings assumes myopia where agents fully discount their future payoffs and optimally choose their ``currently'' preferred actions \textcolor{blue}{\citep{parikh1990communication,gale2003bayesian,harel2021rational,keppo2008optimal,bala1998learning,
razin2021misspecified,frick2020misinterpreting}}. A smaller literature studying forward-looking agents provides important insights into asymptotic properties of information aggregation and agreement between the agents. \textcolor{blue}{\cite{mossel2015strategic}} and \textcolor{blue}{\cite{mossel2020social}} study asymptotic properties of learning and agreement in networks. More recently, \textcolor{blue}{\cite{huang2021learning}} characterize the speed of learning by many forward-looking agents in networks showing failure of information aggregation.} Such settings are difficult to analyze because, different to myopic settings, the behavior of forward-looking agents depends on higher-order beliefs. 

In this paper we are able to explicitly characterize the behavior of forward-looking agents within a tractable model of social learning. As we explain further below, the key feature of our environment is the symmetry between the agents in terms of the information structure that determines their private information. 


Importantly, we show that in our environment of social learning, forward-looking agents behave myopically for \textit{any} degrees of patience. Moreover, the equilibrium in myopic strategies is the unique equilibrium within a family of symmetric strategies that allows for both monotonic and the simplest non-monotonic strategies. This finding may seem surprising as one might expect that forward-looking players have an incentive to experiment strategically and sacrifice their current-period payoffs to learn faster. 

To explain the reasoning behind our results, we first sketch the model. Two players simultaneously and repeatedly choose one of the binary actions in each period. The available actions are $-1$ and $+1$, and the game can be finite or infinite in duration. The unobserved and persistent state is distributed on the real line, with the mean normalized to 0. The state is drawn at the beginning of the game; subsequently, each player receives a single noisy and informative private signal about the state. The support for the signals is the real line, and the signals are iid. The stage payoff for each player is a product of the realized state and her action. That is, each player's current-period payoff is maximized if she correctly matches the sign of the state with the sign of her action: if a player believes that the state is positive, the ``currently'' preferred action is $+1$ and otherwise it is $-1$. The payoffs are only observed at time ``infinity''.

 There are three central features of the model. Actions are coarse relative to signals -- the players cannot fully reveal their signals through actions. Relatedly, signals cannot be directly communicated and the only way to learn other player's private information is by observing their actions. Finally, the choice of actions is payoff-relevant.

As indicated above, we distinguish between symmetric and asymmetric settings. A symmetric setting satisfies two conditions: (1) the prior is symmetric around the mean and (2) players' signals are drawn from the same signal family where ``positive states'' have the same stochastic properties as ``	negative states''. \emph{The players themselves need not be symmetric}: they can differ ex ante as their discount factors may differ, and also differ ex post after they observe the realized signals.

Our main result shows that in the symmetric setting myopic equilibrium always exists, for any profile of discount factors. 
Specifically, we provide two very different models of off-path beliefs, and show that the myopic equilibrium exists under both belief systems. 

In the symmetric setting, myopic equilibrium is unique both in the class of symmetric threshold strategies and the simplest symmetric non-monotonic strategies. The latter strategies allow the players to use two distinct thresholds within the same period.

To understand the above results consider, first, symmetric threshold strategies. If a player uses a non-myopic threshold in some period, the other player must use a symmetric non-myopic threshold in the same period. In this case, however, the threshold types of both players perfectly observe all payoff-relevant information at the end of the period independent of their own action in that period. As a result, both the threshold types of the players, and the types arbitrarily close to the threshold types who are expected to play non-myopically, have a strict incentive to deviate and to choose their respective myopic actions.

Second, consider the simplest symmetric non-monotonic strategy where after some histories the players use two distinct thresholds in the same period: they take the same action below the first, and above the second threshold, while choosing a different action between the two thresholds. We show that under such strategies the players must have a dominant action for some of their types below the lowest threshold, and a \textit{different} dominant action for some of their types above the second threshold. Therefore, such a strategy can never be part of an equilibrium.

We then show that in the class of arbitrary (i.e., not only symmetric) threshold strategies, asymptotically -- as the time horizon goes to infinity -- only the myopic equilibrium can fully aggregate private information. This information aggregation property of the myopic play is based on the following results proven in the paper. 
First, under myopic play disagreement cannot continue forever. Second, agreement is always based on weakly dominant strategies. The two results imply that eventually the players agree and under an agreement they choose the same actions as what they \textit{would have chosen} under complete information.
Finally, any other threshold strategy fails to fully aggregate private information: there always exist measurable sets of player types where players choose a wrong action (i.e. a different action compared to the one they would have chosen under complete information).

Finally, we study asymmetric environments and show that the myopic equilibrium does not exist in infinite games. Here, an agreement under myopic play does not exhaust all learning potential as it is the case in the symmetric setting. In fact, in the asymmetric setting already in the second period of the game there are types of each player who have a profitable deviation from the myopic play. Their current-period losses from a deviation could be held arbitrarily small, while at the same time they expect to have strictly positive gains from learning in future periods. 


\medskip

\noindent\textbf{Related Literature}. We are not aware of any other papers that explicitly characterize equilibrium behavior within a repeated-action setting of social learning with forward-looking players.

There is an important literature on asymptotic properties of learning and agreement. The perspective here is that of a (long-run) steady-state as in \textcolor{blue}{\cite{aumann1976agreeing}} who shows that in a common-prior setting, if players' posteriors are common knowledge, then the players must agree on their assessment of an event. \textcolor{blue}{\cite{geanakoplos1982we}} take a dynamic perspective and analyze the sequential exchange of posteriors that eventually leads to an agreement. \textcolor{blue}{\cite{mossel2020social}} adopt the perspective of a static equilibrium, a long-run steady-state similar to \textcolor{blue}{\cite{aumann1976agreeing}}, and investigate whether forward-looking players learn the dispersed information, and whether they agree on the common course of action. They show that in a variety of canonical models within social learning literature, learning and agreement happens asymptotically.

\textcolor{blue}{\cite{mossel2020social}} build upon their earlier work, \textcolor{blue}{\cite{mossel2015strategic}}. The latter paper studies agents' asymptotic behavior in observation networks. In particular, it studies how the geometry of networks affects asymptotic agreement and learning. The main take-away here is that if there are no players who are very influential (i.e., if the networks are sufficiently egalitarian), learning and agreement happen in the long run. I.e., Bayesian players would eventually agree on the right action, given the realized state of the world. Interestingly, this finding holds qualitatively also for boundedly rational players \textcolor{blue}{\citep{bala1998learning}}.

We complement the above literature on asymptotic properties of learning and agreement in that we study how the players arrive at their beliefs. In this sense, while the above literature takes the perspective of \textcolor{blue}{\cite{aumann1976agreeing}}, we take the perspective similar to \textcolor{blue}{\cite{geanakoplos1982we}} and focus on the explicit learning path of forward-looking players.\footnote{For a broader discussion of the literature on social learning in a repeated-action setting we refer the reader to an excellent survey by \textcolor{blue}{\cite{bikhchandani2021information}}.}

We note here that \textcolor{blue}{\cite{mossel2015strategic}} build upon \textcolor{blue}{\cite{rosenberg2009informational}} who establish an imitation principle that states that asymtotitcally each agent does as good as their neighbor, whom he observes and whom he can imitate. A similar principle is discussed in \textcolor{blue}{\cite{golub2017learning}}. \textcolor{blue}{\cite{rosenberg2009informational}} build upon \textcolor{blue}{\cite{gale2003bayesian}} who analyze learning in networks with myopic players.

Our model assumes coarse action space relative to the signal space. As a result, the players cannot reveal their signals through actions. In a setting with sufficiently rich signal space, \textcolor{blue}{\cite{mueller2014does}} shows that information is aggregated independently of the underlying network configuration.

There is a complementary literature that departs from the perfect rationality assumption. For example, \textcolor{blue}{\cite{frick2020misinterpreting}} show that in a repeated-action setting of social learning even a small misperception of the correct model by the agents can lead to stark negative effects on learning. \textcolor{blue}{\cite{bohren2021learning}} study multiple forms of misspecification. In their paper, ``decoupled learning'' as in \textcolor{blue}{\cite{frick2020misinterpreting}} is absent so that aggregation of dispersed information is robust to small misspecifications. Further studies of learning by boundedly rational agents in a repeated-action framework include  \textcolor{blue}{\cite{degroot1974reaching}}, \textcolor{blue}{\cite{demarzo2003persuasion}} and \textcolor{blue}{\cite{golub2010naive}}.

A complementary literature on herding studies social learning in settings where each agent moves only once \textcolor{blue}{\citep{banerjee1992simple,bikhchandani1992theory,acemoglu2011bayesian,lobel2012social,
smith2000pathological}}. This literature characterizes information cascades where optimal actions could be chosen independent of agents' own private information.

\section{Model}
Two players, $a$ and $b$, simultaneously and repeatedly choose binary actions $z$ from the set $\{-1,+1\}$ in each period $t=0, 1, 2 \ldots \bar{T}$, where $\bar{T}$ may be finite or infinite. There is an unobserved, payoff-relevant state $x\in\mathbbm{R}$, with continuous and full-support prior density $f(x)$, and with mean set to 0.

\noindent \textit{Payoffs}: If agent $i$ takes action $z_i$, her stage payoff is $z_ix$. So the signs of action and state need to match for a positive payoff, and the larger the absolute value of the state, the larger the absolute value of the payoff. For any sequence of actions $\{ z_i(\tau )\}$ starting from $t$, $i$'s discounted continuation payoff is $\mathbbm{E}_t [\sum_{\tau =t}^{\bar{T}} \delta_i^{\tau -t} z_i(\tau)x]$, where the expectation is taken relative to information available at date $t$. 
As is customary, we assume that the players experience their payoffs ``at infinity."


\noindent \textit{Signals and information}:  At the start of their interaction, each player $i$ receives a \emph{one-time} private signal (or type) $s_i$  about $x$, (conditionally) independently drawn with {full-support  density} $g_i(s_i|x)$, with mean $x$.   We will often use the term ``type" to describe these initial signals.

When we use $i$ to index either player (say $b$), we will refer to the other player as $j$ (say $a$). Let $E_i(s_i, s_j)$ be the posterior expectation of $x$ (using Bayes' Rule) under  $s_i$ and $s_j$. Of course, $E_a(s_a, s_b) = E_b (s_b, s_a)$  but there is no additional presumption of symmetry.
We can extend $E_i$ to sets as follows. Suppose that $j$'s signal is {believed} to lie in the set $S_j$. Then player $i$ with this knowledge computes the posterior mean of $x$ by integrating $E_i(s_i, s_j)$ as $s_j$ ranges with its conditional density projected onto the restricted domain  $S_j$. With some abuse of notation we also refer to this function as $E$. We assume that:

[E.1] {For any set $S_j$}, $E_i(s_i, S_j )$ is increasing in $s_i$, with $\lim_{s_i \to \infty} E_i(s_i, S_j ) > 0$ and $\lim_{s_i \to -\infty} E_i(s_i,S_j ) < 0$.

%
%
%
%

As an example: $x$ is drawn from $N(x_0,\sigma^2)$ with $x_0=0$. The agents observe signals $s_a=x+\epsilon_a$, $s_b=x+\epsilon_b$, where $\epsilon_a\sim N(0,\sigma_a^2)$ and $\epsilon_b\sim N(0,\sigma_b^2)$, so that:
\[x\sim N(x_0,\sigma^2),\ s_a|x\sim N(x,\sigma_a^2),\ s_b|x\sim N(x,\sigma_b^2).\]
For $S = [a, b]$ and $m \in S$, call $[m, b]$ a \emph{left truncation} and $[a, m]$  a \emph{right truncation} of $S$. (A right truncation is strict if $m \neq b$, and likewise for left truncations.) It is easy to see that $E(s, S') > E(s, S)$ if $S'$ is a strict left truncation of $S$, with the opposite inequality if $S'$ is a strict right truncation.

\emph{Strategies and equilibrium}: The past play of all actions up to any date $t$  --- the (public) $t$-history --- is commonly observed, with any singleton standing for the null history at $t=0$. A \emph{strategy} for player $i$ specifies $z \in \{ -1, +1 \}$ at each $t \geqslant 0$ conditional on her type and the $t$-history: all actions taken prior to $t$. Our solution concept is perfect Bayesian equilibrium. We  discuss restrictions on ``off-path" beliefs below.

\noindent \emph{Central features}: While the model is highly stylized, three features are critical, while the rest can be relaxed or generalized substantially. The first is that individuals cannot communicate their signals directly. Second (and relatedly), the set of available actions is coarse relative to the signals, so that signals cannot be fully inferred from the actions  --- the assumed continuum of states and signals is meant to approximate this. The third feature is that  actions, while directly payoff-relevant, also have signaling value.

\section{More on Strategies and Beliefs}
\subsection{Histories} A history at date $t$ --- or a \emph{$t$-history} $h_t$ --- is a full specification of actions taken by both players up to the start of (but not including actions at) date $t$. By convention, there is just some arbitrary singleton history at 0.

\subsection{Threshold Strategies} A \emph{threshold strategy} for  $i$ is a (history-dependent) sequence $\{ \mu_i(t)\}$, with the interpretation that $i$ chooses $+1$ at date $t$ if $s_i > \mu_i(t)$, and $-1$ if $s_i < \mu_i(t)$. The specification at $s_i = \mu_i(t)$ is arbitrary, and we ignore it throughout.

\subsection{Myopic Strategies} 
A  \emph{myopic strategy} chooses actions $z_i(t)$ at every date to maximize current payoff $z_i(t)\mathbbm{E_t}(x)$. Under (E.1), a myopic strategy is a threshold strategy.

\subsection{Beliefs} As play unfolds under some presumed strategy profile, each player $i$ will have a belief (about $j$'s type) supported on a closed subset of types for $j$. At the start of each date $t$, {before} actions have been chosen, let $S_j(t)\subseteq \mathbbm{R}$ denote this \emph{belief set}. To illustrate, at the start of date of $0$, $a$ believes that any signal is possible for $b$, so $S_{b}(0)= \mathbbm{R}$. Suppose that $b$ is believed by $a$ to use myopic strategies, and $b$ chooses $+1$. Then $S_{b}(1)=[0,\infty)$. At date $1$, suppose that $b$ is believed to use a threshold  $\mu_{b}(1) >0$. If $b$ chooses $-1$ at date $1$, then $S_{b}(2)= [0, \mu_b(1)]$, and so on. 

Suppose that $i$ believes that $j$ uses a particular strategy. Play is \emph{presumed (by $i$) to be on-path} if no action by $j$ contradicts $i$'s belief.

\begin{obs}
With $j$ believed to use threshold strategies, $i$'s belief sets after every presumed on-path history are  intervals, independent of $i$'s type.
\label{obs:thresh}
\end{obs}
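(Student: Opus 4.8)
The plan is to argue by induction on the date $t$, tracking how the belief set $S_j(t)$ refines as $j$'s publicly observed actions accumulate. The base case is immediate: at $t=0$ player $i$ has observed nothing about $j$, so $S_j(0)=\mathbbm{R}$, which is an interval and manifestly does not depend on $i$'s type.

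For the inductive step I would suppose that $S_j(t)$ is an interval determined solely by the $t$-history $h_t$. Since $i$ believes $j$ plays a threshold strategy, the threshold $\mu_j(t)$ that $i$ attributes to $j$ at date $t$ is a function of the public history $h_t$ only (the history records only publicly observed actions, and $j$'s presumed strategy is fixed), so it is independent of $s_i$. Player $i$ then observes $j$'s action $z_j(t)$, and under the threshold interpretation observing $z_j(t)=+1$ reveals $s_j>\mu_j(t)$ while observing $z_j(t)=-1$ reveals $s_j<\mu_j(t)$ (the boundary type $s_j=\mu_j(t)$ is ignored). Hence
\[
S_j(t+1)=\begin{cases} S_j(t)\cap(\mu_j(t),\infty) & \text{if } z_j(t)=+1,\\ S_j(t)\cap(-\infty,\mu_j(t)) & \text{if } z_j(t)=-1.\end{cases}
\]
The intersection of an interval with a half-line is again an interval, and since play is presumed on-path the observed action is consistent with some type in $S_j(t)$, so $S_j(t+1)$ is nonempty. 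Every object entering this recursion---$S_j(t)$, $\mu_j(t)$, and $z_j(t)$---is a function of $h_{t+1}$ alone, so $S_j(t+1)$ inherits both properties and the induction closes.

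The one point I would be careful about is the precise meaning of ``belief set'' and why it is type-independent even though $i$'s posterior over $s_j$ is \emph{not}. I would resolve this by identifying $S_j(t)$ with the \emph{support} of $i$'s posterior, i.e. the set of $j$-types that could have generated the recorded actions of $j$ under the presumed threshold strategy. This is a purely set-theoretic condition on $s_j$ given the public history, making no reference to $s_i$; and because $g_j(\cdot\mid x)$ has full support, every consistent $s_j$ receives strictly positive posterior weight for any $s_i$, so the support is exactly this consistent interval. By contrast, the posterior \emph{density} over $S_j(t)$ does depend on $s_i$ through the correlation of the two signals via the state $x$, but the Observation concerns only the support. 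Making this distinction explicit is the main (and essentially the only) subtlety; the interval structure itself follows mechanically from the half-line intersections.
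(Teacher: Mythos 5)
Your proof is correct and follows essentially the same induction as the paper's: base case $S_j(0)=\mathbbm{R}$, then intersecting the going interval with the half-line determined by the history-dependent threshold $\mu_j(t)$ and the observed action, with on-path play guaranteeing nonemptiness. The support-versus-density distinction you flag is the right one, and it is exactly the point the paper makes in the remark immediately following the Observation.
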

\begin{proof} $S_j(0) = \R$, so the claim is true at date 0. Inductively, suppose that for some $t \geqslant 0$, the claim is true. Consider date $t+1$ and any presumed on-path $(t+1)$-history at the start of $t+1$. Consider its $t$-subhistory. Then $S_j(t)$ is an interval independent of $i$'s type. Let $\mu_ j(t)$ be $j$'s threshold at $t$. Because the $(t+1)$-history is presumed on-path, $j$ can only play $-1$ if $\mu_ j(t) \ge \inf S_j(t)$. If she does,  $i$'s belief set becomes
\[
S_j(t+1) = (-\infty, \mu_ j(t)] \cap S_j(t) = [\inf S_j(t), \min\{ \mu_ j(t),\sup S_j(t)\}],
\]
an interval independent of $i$'s type. Parallel arguments apply if $j$ plays $+1$ on-path.
\end{proof}
Note that while the on-path belief sets held by $i$ are independent of $i$'s type, her belief \emph{distribution} over the type of $j$ will certainly depend on $i$'s realized type --- after all, the types are correlated  via the true state. But with pure strategies that  belief distribution can be calculated given only the going belief set. With that in mind,
for any belief set $S_j$, define $\s_i (S_j)$ by the unique solution in $m$ to $E(m_i, S_j) = 0$. For any singleton belief set $S_j = \{ z \}$, we write $\sigma _i(\{ z\})$ simply as $\sigma_i (z)$.
\begin{obs} 
For any belief set $S_j$ at any date $t$:

(i) Player $i$'s myopic strategy dictates that she plays $-1$ if $s_i < \s_i (S_j)$,  and $+1$ if $s_i > \s_i (S_j)$.

(ii) If $s_i < \s_i (\sup S_j)$, then it is a dominant action for $i$ to play $-1$.  Likewise, if $s_i > \s_i (\inf S_j)$, then it is a dominant action for $i$ to play $+1$. That dominance continues for all subsequent on-path histories.
\label{obs:dom}
\end{obs}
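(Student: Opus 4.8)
The plan is to treat part (i) as an immediate consequence of [E.1] and the definition of $\sigma_i$, and to reduce part (ii) to showing that $i$'s posterior mean of $x$ is sign-definite along every on-path continuation, which removes any experimentation motive. For (i): the myopic strategy maximizes $z_i(t)\,\mathbb{E}_t(x)$, and with belief set $S_j$ the relevant posterior mean is $E_i(s_i, S_j)$. By [E.1] this is strictly increasing in $s_i$ with limits of opposite sign, so it has a unique zero at $\sigma_i(S_j)$, is negative for $s_i < \sigma_i(S_j)$, and positive for $s_i > \sigma_i(S_j)$. The myopically optimal action therefore switches sign exactly at $\sigma_i(S_j)$, as claimed.

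For (ii), I would first record two monotonicity facts. (a) Because $E_i(s_i, s_j)$ is increasing in $j$'s signal (the pointwise form of the truncation monotonicity recorded above), for any $S' \subseteq S_j$ the average $E_i(s_i, S')$ is bounded above by $E_i(s_i, \sup S_j)$, the value obtained by placing all weight on the top of $S_j$. (b) $\sigma_i(z)$ is decreasing in $z$, since $E_i$ is increasing in both arguments and so raising $j$'s type forces $i$'s zero-crossing down. Now suppose $s_i < \sigma_i(\sup S_j)$. By the definition of $\sigma_i(\sup S_j)$ together with monotonicity in $s_i$, $E_i(s_i, \sup S_j) < 0$; by (a), $E_i(s_i, S') < 0$ for every $S' \subseteq S_j$. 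Along on-path play every future belief set is obtained by conditioning on $j$'s observed actions and is therefore contained in $S_j$ (Observation \ref{obs:thresh}), so $i$'s posterior mean of $x$ is strictly negative at date $t$ and at every subsequent on-path history; equivalently, by (b) the threshold $\sigma_i(\sup S_j)$ only rises as $S_j$ shrinks, so $s_i$ stays below it.

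Finally I would convert the persistent negativity of the posterior into dominance. Player $i$'s continuation payoff is $x\sum_\tau \delta_i^{\tau-t} z_i(\tau)$, which depends on $j$ only through the information $i$ extracts. Since $i$'s myopically optimal action is $-1$ at every reachable on-path history regardless of what she learns, her action at date $t$ cannot alter her own future optimal actions, and hence cannot alter her continuation payoff; the comparison between $+1$ and $-1$ at $t$ therefore collapses to the current-period term, which strictly favors $-1$ because $E_i(s_i, S_j) < 0$. This gives strict dominance at $t$, and the same argument applied at each later on-path history gives the persistence claim. The case $s_i > \sigma_i(\inf S_j)$ is symmetric, with all inequalities reversed and $E_i(s_i, S') \ge E_i(s_i, \inf S_j) > 0$.

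The step that needs the most care, and the one I would treat as the main obstacle, is the claim that the date-$t$ action has no informational value. Deviating to $+1$ does change $j$'s beliefs, hence $j$'s future actions and the signals $i$ subsequently observes, so one must rule out that some such deviation engineers a future history at which $i$'s posterior turns positive and $+1$ becomes attractive. Fact (a) is exactly what excludes this: every on-path belief set $i$ can ever hold is a subset of $S_j$, so the bound $E_i(\cdot, \sup S_j) < 0$ survives every reshaping of the belief set, including those triggered by $i$'s own deviation. I would want to verify that the subset property relied on here holds for the belief-updating rule in force—it does for threshold strategies by Observation \ref{obs:thresh}, since conditioning on observed actions only refines the support—and that the ``on-path'' qualifier in the statement correctly confines attention to histories where this refinement, rather than an off-path belief revision, applies.
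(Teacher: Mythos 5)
Your proposal is correct and follows essentially the same route as the paper: part (i) falls out of [E.1] and the definition of $\sigma_i$, and part (ii) rests on the observation that on-path belief sets can only shrink, so $i$'s posterior mean of $x$ stays sign-definite forever, which makes the matching action dominant. You simply make explicit two steps the paper leaves implicit --- the bound $E_i(s_i,S')\leqslant E_i(s_i,\sup S_j)$ for $S'\subseteq S_j$, and the argument that persistent sign-definiteness strips the date-$t$ action of any informational value --- which is a faithful elaboration rather than a different proof.
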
   
\begin{proof}
Part (i) is a direct consequence of Assumption E.1 and the definition of $\sigma_i$. For part (ii), 
suppose that $s_i < \s_i (\sup S_j)$. Note that player $i$ holds certain  beliefs that player $j$'s signal is no larger than $\sup S_j$. But then, player $i$'s expectation of the state is negative, and she is certain that it will remain that way in the future, because presumed on-path belief sets can never expand. Therefore her play of $-1$ must be dominant. A parallel argument holds when $s_i > \s_i (\inf S_j)$.
\end{proof}
Of course, a player could be confronted by an \emph{unexpected action}, one incompatible with her current beliefs about types and strategies. Then belief sets could alter in all sorts of ways, and in particular, previously dominant actions could be rendered un-dominant. We will return to this issue below.

\section{Some Properties of Myopic Play}
Under myopic play, once two players agree at date $t$ --- that is, they take the same action --- they must agree forever after, and if they disagree, there is always room for further updating. In short, barring measure-zero realizations of types, they cannot disagree forever. We do not provide these results for their own sake as they are closely related to well known arguments about agreeing to disagree, but rather as self-contained lemmas for the propositions that we shall later establish.

It is convenient to work in a slightly more general setting,  with arbitrarily assigned initial belief sets.  (In actuality,  initial belief sets are pegged at $\R$ for either player.)  At any initial date, a configuration under myopic play is given by $\{ S_a, S_b, m_a, m_b \}$ such that each  $S_j$ is an interval representing the belief of $i$ about $j$'s type, and each $m_i$ is to be interpreted as player $i$'s \emph{myopic threshold} at that date. By Observation \ref{obs:dom}(i), $m_i = \s_i (S_j)$.
Given this configuration, \emph{actual} myopic play at that date depends on the types $(s_a, s_b) \in S_a \times S_b$. There is \emph{agreement} if both players choose the same action and \emph{disagreement} if they choose different actions. The following observation charts the course of myopic play. In what follows and in the rest of the paper, ``will happen," ``will agree," cannot happen," etc., all refer to probability one events. 
\begin{obs} Assume myopic play relative to some initial  configuration at some starting date $t$, given by two non-degenerate closed intervals $S_a$ and $S_b$, and that 
\begin{equation}
m_i \in \textnormal{\mbox{Int }} S_{i} \mbox{ for at least one player } i.
\label{eq:intm}
\end{equation}
Then: 

(i) if $a$ and $b$ agree at $t$, they will agree for all $s \ge t$, with the same actions as at date $t$.

(ii) if they disagree, then either they will agree tomorrow, or the updated configuration has  the properties listed at the start of this Observation,  including (\ref{eq:intm}).

(iii) Disagreement cannot continue forever. 
\label{obs:myopic}
\end{obs}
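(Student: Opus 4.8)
The plan is to run all three parts off a single engine: the strict monotonicity of $E$ under truncation recorded right after (E.1), combined with the belief-set bookkeeping of Observations \ref{obs:thresh}--\ref{obs:dom}. The basic move is this. When $a$ plays $+1$ and $b$ plays $-1$, player $b$'s belief about $a$ left-truncates to $S_a\cap(m_a,\infty)$ and player $a$'s belief about $b$ right-truncates to $S_b\cap(-\infty,m_b)$; by the truncation property this \emph{raises} $a$'s myopic threshold, $m_a'=\sigma_a(S_b')\ge m_a$, and \emph{lowers} $b$'s, $m_b'=\sigma_b(S_a')\le m_b$. Moreover, because each realized type lies in the going belief set, the updated intervals automatically contain a witness to non-degeneracy: $a$ playing $+1$ forces $m_a<s_a\le\sup S_a$, so $S_a'$ is non-degenerate, and symmetrically for $S_b'$.

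For part (i), suppose both play $+1$, so $s_a>m_a$ and $s_b>m_b$. Both belief sets now left-truncate, so both thresholds weakly fall, $m_a'\le m_a<s_a$ and $m_b'\le m_b<s_b$; hence both play $+1$ again, the thresholds stay monotone decreasing, and the realized types remain above them. An induction gives permanent agreement on $+1$, and agreement on $-1$ is the mirror image (right-truncations, rising thresholds).

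For part (ii), suppose they disagree, say $a:+1$, $b:-1$. Non-degeneracy of the updated sets is automatic and $m_i'=\sigma_i(S_j')$ holds by Observation \ref{obs:dom}(i), so the only thing to check, in the event they disagree \emph{again} at $t+1$, is that (\ref{eq:intm}) is inherited. I would prove the contrapositive: if (\ref{eq:intm}) fails then \emph{neither} action at $t+1$ is informative, i.e. every surviving type of each player takes the same action. In the pattern $a:+1$, $b:-1$ this means $m_a'\le\inf S_a'$ and $m_b'\ge\sup S_b'$, hence $E_a(\inf S_a',S_b')\ge0$ and $E_b(\sup S_b',S_a')\le0$. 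Replacing each belief set by the point mass at its relevant endpoint and invoking the strict truncation monotonicity of $E$, the first inequality yields $E(\inf S_a',\sup S_b')\ge0$ and the second $E(\inf S_a',\sup S_b')\le0$; equality throughout then forces \emph{both} $S_a'$ and $S_b'$ to be degenerate, contradicting non-degeneracy. The crossed pattern $a:-1$, $b:+1$ is identical with the endpoints swapped. So (\ref{eq:intm}) survives.

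Part (iii) is the substantive claim and the main obstacle. I would argue by contradiction: if disagreement persisted at every date on a positive-measure set of type profiles, then by (ii) along each such path (\ref{eq:intm}) holds forever and the two belief sets are nested non-degenerate intervals, so $S_a(t)\downarrow S_a(\infty)$ and $S_b(t)\downarrow S_b(\infty)$ with the true types trapped in the limits, and $m_i(t)\to\sigma_i(S_j(\infty))$ by continuity of $E$. For a.e. profile $\sgn(s_i-m_i(t))$ eventually stabilizes, so the tail is sustained disagreement, say $a:+1,b:-1$; then the successive truncation points converge to the thresholds, giving $\inf S_a(\infty)=\sigma_a(S_b(\infty))$ and $\sup S_b(\infty)=\sigma_b(S_a(\infty))$ — that is, the limiting threshold types are exactly indifferent. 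This is precisely the configuration to which the degeneracy lemma of (ii) applies: the two indifference conditions together with strict truncation monotonicity of $E$ force $S_a(\infty)$ and $S_b(\infty)$ to be singletons, so the trapped types $s_a,s_b$ are pinned to single points, a measure-zero event, contradicting positive measure. The delicate points are purely the limiting bookkeeping (monotone convergence of the nested intervals, continuity of $\sigma_i$, and the routine reduction to a fixed disagreement pattern once actions stabilize off a null set); the economic content sits entirely in the one-line degeneracy lemma, which says persistent mutual disagreement can be sustained only when each player's belief about the other has already collapsed to a point.
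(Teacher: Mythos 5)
Your proposal is correct in substance but reorganizes parts (ii) and (iii) around a single device that the paper does not isolate: a \emph{degeneracy lemma} saying that deterministic mutual disagreement (i.e.\ $m'_a \le \inf S'_a$ together with $m'_b \ge \sup S'_b$, or the mirror pattern) forces, via strict truncation monotonicity of $E$, both belief intervals to collapse to points. Part (i) is the same truncation-monotonicity argument as the paper's. For (ii), the paper instead argues directly: it shows $m'_a < \min\{m_a, \sup S_a\}$ and splits into the sub-cases $m'_a > \inf S'_a$ (interiority inherited by $a$) and $m'_a \le \inf S'_a$ (where either $m'_b$ becomes interior or agreement is forced tomorrow); your contrapositive reaches the same disjunction more symmetrically and makes the lemma reusable. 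For (iii), the paper extracts a subsequence along which one player's belief set ``always shrinks'' and asserts the nested intersection is a singleton; your passage to the limit sets $S_a(\infty), S_b(\infty)$ and application of the degeneracy lemma there actually supplies the reason behind that assertion (strict shrinking alone does not yield a singleton), so your route is tighter at the one place the paper is terse. Two soft spots, neither fatal: first, the step ``for a.e.\ profile $\sgn(s_i-m_i(t))$ stabilizes'' is better replaced by the observation that if player $i$'s actions do \emph{not} stabilize then both endpoints of $S_i(t)$ are reset infinitely often to values $m_i(t)\to m_i(\infty)$, so $S_i(\infty)$ is already a singleton --- hence in every case at least one limit set degenerates; relatedly, your endpoint identities should be the inequalities $\inf S_a(\infty)\ge \sigma_a(S_b(\infty))$ and $\sup S_b(\infty)\le \sigma_b(S_a(\infty))$, which is all the lemma requires. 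Second, the concluding step ``history-dependent singleton $\Rightarrow$ probability zero'' passes over the union across uncountably many infinite histories, but the paper's own proof makes exactly the same leap, so you lose nothing relative to it.
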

\begin{proof}
(i) Suppose that the two players agree by each playing $+1$ (the argument for $-1$ is obviously symmetric). Then $m_i < \sup S_i$ for each $i$, and the updated belief sets are given by
\begin{equation}
S'_i = [\max\{ \inf S_i, m_i \}, \sup S_i] \mbox{ for } i = a, b.
\label{eq:agree}
\end{equation}
Then $S'_{i}$ is a  left truncation of $S_i$ for each $i$,  so
\begin{equation}
m'_i \leqslant m_i \mbox{ for } i = a, b.
\label{eq:m1}
\end{equation}
Combining (\ref{eq:agree}) and (\ref{eq:m1}), we see that no more belief updates will occur and that both players will play $+1$ thereafter at all future dates.

(ii) Suppose that the two players disagree by playing $\{-1, +1\}$ (the argument for $\{ +1, -1\}$ is again symmetric). This can only happen with positive probability if $m_a > \inf S_a$ and $m_b < \sup S_b$. Suppose without loss of generality that  (\ref{eq:intm}) holds for player $b$. Then the updated belief sets are given by
\begin{equation}
S'_a = [\inf S_a, \min\{ m_a, \sup S_a \}] \mbox{ and } S'_b = [m_b, \sup S_b].
\label{eq:disagree}
\end{equation}
 Then $S'_b$ is a strict left truncation of $S_b$, and so $m'_a < m_a$.  Now observe that  $E(S_a, m_b)=0$, which implies from (\ref{eq:disagree}) that $E(S_a, S'_b) > 0$. But $E(m'_a , S'_b )= 0$, so we must conclude that $m'_a < \sup S_a$, and therefore we have
\begin{equation}
m'_a <  \min\{ m_a, \sup S_a \}.
\label{eq:amin}
\end{equation}
 With (\ref{eq:amin}) in mind, we see that just two cases are possible: 
 
Case ii-a. $m'_a > \inf S_a = \inf S'_a$. Then using (\ref{eq:disagree}) and (\ref{eq:amin}), we see that $m'_a \in\mbox{Int } S'_a$ and we are done.
 
Case ii-b. $m'_a \leqslant \inf S_a = \inf S'_a$. Now recall that $S'_a$ is a right truncation of $S_a$, so $m'_b \geqslant  m_b$. If equality holds, then, using (\ref{eq:disagree}), we have $m'_a \leqslant \inf S'_a$ as well as $m'_b = m_b =  \inf S'_b$, so tomorrow both players must agree (and play +1) with probability one.

On the other hand, if equality does not hold, then $m'_b > m_b =  \inf S'_b$. In this subcase, using  $m'_a \leqslant \inf S'_a$, we have $E(m'_a, m'_b) \leqslant E(S'_a, m'_b)=0$. But we also know that $E(m'_a, S'_b) = 0$. Combining and using the non-degeneracy of $S'_b$, we must conclude that $m'_b < \sup S'_b$, which implies that $m'_b \in\mbox{Int } S'_b$ and once again we are done.

(iii) Suppose there is perpetual disagreement. Then starting from the configuration $(S_a, S_b, m_a, m_b) = (\R , 0, \R , 0)$, which satisfies (\ref{eq:intm}), let $\{ S_a(t), S_b (t), m_a(t), m_b(t)\}$ be the succeeding sequence of belief sets. At each date (\ref{eq:intm}) is satisfied. So there is some player $i \in \{ a, b\}$ and a subsequence of dates $t_k$ such that $m_i(t_k) \in \mbox{Int } S_i(t_k)$ for all $k$, and by the updating equation (\ref{eq:disagree}) for disagreement, $\cap_{k} S_i(t_k)$ is a \emph{singleton} --- the nested sequence $S_i(t_k)$ always shrinks. By Observation \ref{obs:thresh}, these sets depend on the history of play but --- controlling for that history --- not on the players' types. And yet for disagreement to be perpetual with positive probability, it must be that this singleton limit equals the true type of $i$, which can only happen with probability zero.
\end{proof}

\section{Symmetry}
\label{section:sym}
Our model is \emph{symmetric} if the following conditions hold:

(i) The prior on the state  is symmetric about its  mean of 0: $f(z)=f(-z)$ for all $z$.

(ii) Player types are drawn from the same signal family $\{ g(s|x)\}$, and for each $(x,s)$, 
$g(s|x)=g(-s|-x)$.

Thus symmetry holds if ``positive states" have exactly the same stochastic properties as ``negative states." Notice that this does not impose full symmetry on the players themselves, certainly not ex post after they receive their draws, and not even ex ante (they could have different discount factors).

{Neither perfect Bayesian equilibrium nor its known refinements restrict off-path beliefs in our context. We note two  opposing but reasonable restrictions on such beliefs.

\textbf{B.1.}  \emph{Belief Inertia}. Upon observing an unexpected action by $j$ that is incompatible with on-path play, player $i$  disregards that action and does not update her beliefs at all.

\textbf{B.2.} \emph{Belief Reset}. Upon observing an unexpected action by $j$ that is incompatible with on-path play, player $i$ discards every recent belief update, working backwards, until she finds the most recent prior at which the current move can be viewed as on-path. She then updates that prior with the current move. 

The two variants live at extreme ends, in the following sense: (1) assumes that  the latest unexpected move is a  tremble, while (2) fully respects  the latest move, with the immediately preceding moves viewed as  trembles.

\begin{theorem}
Under symmetry, myopic play is an equilibrium, and off-path beliefs can be taken to satisfy (B .1) or (B.2). 
\label{myopic_equil}
\end{theorem}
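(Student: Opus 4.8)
The plan is to exploit two structural consequences of symmetry and then to use the fact that payoffs carry no externality: a deviation can matter to a player only through the \emph{information} she subsequently receives, and I will argue that myopic play already extracts this information optimally, so there is no trade-off between current payoff and learning for \emph{any} discount factor. First, for the common posterior-mean function $E$, conditions (i)--(ii) of symmetry give both symmetry in its arguments ($E(s,s')=E(s',s)$, since the signal families coincide) and the sign-flip identity $E(-s,-s')=-E(s,s')$. I would show by induction, using the updating rules in the proof of Observation \ref{obs:myopic}, that along every presumed on-path history $S_a(t)=-S_b(t)$ and hence $m_a(t)=-m_b(t)$: the initial configuration $(\R,\R,0,0)$ is symmetric, and agreement, disagreement, and reversal all preserve reflection symmetry. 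Combining the two identities yields the clean fact $E(s_a,-s_a)=0$, so player $a$'s \emph{full-information} optimal action is simply $\sgn(s_a+s_b)$: her cutoff in $s_b$ is $-s_a$, and she wants $+1$ iff $s_b>-s_a$.

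Next I would reduce the equilibrium check to an informational one. Since $a$'s payoff depends only on her own actions and $x$, player $b$'s behavior matters to $a$ solely as a source of information about $s_b$. Fixing $b$ myopic and $a$ best-responding from $t+1$ on, $a$'s value is $\sum_\tau \delta_a^{\tau}\,\E\,|E_a[x\mid\mathcal F_\tau]|$, which is increasing in the informativeness of $a$'s information $\mathcal F_\tau$ about $s_b$ (a conditional-Jensen/submartingale argument). Because moves are simultaneous, $a$'s date-$t$ action cannot shrink $\mathcal F_t$; it affects only $b$'s \emph{future} thresholds, through $b$'s belief about $a$.

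The crux is to show that the myopic action is simultaneously the current-payoff maximizer and the information-maximizer. Because $m_a(t)=-m_b(t)$, player $a$'s myopic prescription is $+1$ exactly when her own cutoff $-s_a$ lies \emph{below} $b$'s current threshold $m_b(t)$; and playing $+1$ signals a high type, which lowers $b$'s subsequent threshold $\sigma_b(S_a(t+1))$, i.e.\ moves it \emph{toward} $-s_a$ (symmetrically for $-1$). Thus at every date the myopic action pushes $b$'s future threshold in the unique direction that hastens resolution of $\sgn(s_a+s_b)$. Since a player reveals only one bit per period and myopic play chooses the correct direction each period, myopic play is informationally optimal, so it maximizes the objective term by term regardless of $\delta_a$. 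I expect this to be the main obstacle: making ``informationally optimal'' precise requires a coupling/monotonicity argument tracking the belief-set dynamics of Observation \ref{obs:myopic} through both the divergent-disagreement phase and the bounded-interval reversal phase, and showing that no deviation delivers a payoff-relevant refinement earlier than myopic play. The indifferent threshold type $s_a=m_a(t)$ is the tight case: the identity $E(m_a(t),m_b(t))=0$ from the first step means she learns $\sgn(s_a+s_b)$ at the end of the period \emph{whatever} she plays, so she is exactly indifferent, while every other type incurs a strict current loss from deviating with no compensating informational gain.

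Finally I would handle off-path beliefs by separating two kinds of deviation. Playing the opposite threshold action is an on-path action for $b$'s inference (some type plays it), so it is covered by the previous step and needs no off-path rule. A genuinely off-path move is one played against a \emph{dominant} action; by Observation \ref{obs:dom}(ii) such an action is strictly worse in every period irrespective of all future information, so it is never profitable however $b$ reacts. Hence the off-path rule is immaterial to incentives, and both (B.1) and (B.2) --- each a consistent belief assignment --- sustain myopic play as a perfect Bayesian equilibrium; under (B.1) the conclusion is immediate, since $b$ ignores the deviation and the informational channel vanishes entirely.
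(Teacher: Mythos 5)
Your first step (the reflection symmetry $S_a(t)=-S_b(t)$, $m_a(t)=-m_b(t)$, and $E(s,-s)=0$ along disagreement histories) matches Step 1 of the paper's proof and is fine. The gap is in the crux you yourself flag: the claim that the myopic action is simultaneously the current-payoff maximizer and the ``information-maximizer.'' You assert this via a monotonicity heuristic (``pushes $b$'s future threshold toward $-s_a$'') but never prove it, and the assertion cannot even be evaluated before fixing off-path beliefs, because the informativeness of a deviating strategy depends on how $b$ responds to moves she was not expecting. The paper avoids any general informational-optimality claim. Instead it exploits a lock-in property (its Step 2): if $a$ with $s_a>m_a(t)$ deviates to $-1$ and $b$ happens to play $-1$ as well, then $b$ --- who sees nothing off-path --- treats this as agreement and plays $-1$ forever. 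At that moment $a$ has learned exactly what she would have learned without deviating (namely $s_b\le m_b(t)$), has taken a weak current loss, and --- this is the point your sketch misses --- has \emph{shut off} all future information flow from $b$. The only way for $a$ to restart learning is a \emph{second} move, switching back to $+1$, and that second move is genuinely off-path for $b$. This is precisely where (B.1) and (B.2) enter: under (B.1) $b$ ignores the switch and keeps playing $-1$, so the deviation plan yields strictly less information; under (B.2) $b$ resets, and $a$ is informationally back where she would have been at $t+1$ anyway, with learning delayed by a period, which is unprofitable under discounting or a finite horizon.

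Your off-path paragraph is therefore wrong on two counts. First, you conclude that ``the off-path rule is immaterial to incentives,'' but the content of the theorem is exactly that one must verify no-deviation under each belief rule, and the two rules defeat the multi-period deviation by different mechanisms. Second, you characterize a genuinely off-path move as one ``played against a dominant action'' and hence strictly worse irrespective of future information; but in the relevant deviation plan the off-path move is $a$'s switch back to $+1$ after a fake agreement, and $-1$ is only \emph{believed by $b$} to be dominant for $a$ --- the deviator's true type may well myopically prefer $+1$ at that point. Observation \ref{obs:dom}(ii) gives dominance relative to a player's own correct beliefs, not relative to the opponent's erroneous ones, so it cannot be invoked to dismiss the switch-back. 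Without the paper's Case 1/Case 2 decomposition and the explicit treatment of the switch-back under (B.1) and (B.2), the proof is incomplete.
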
}
\begin{proof}
We refine Observations \ref{obs:thresh} and Observation \ref{obs:myopic} for the symmetric case.

\textbf{Step 1.} Under the common belief that play is myopic, consider any on-path $(t+1)$-history with no common action ever taken for $\tau \leqslant t$. Then $m_a(t+1) = - m_b(t+1)$, $S_a(t+1)$ and $S_b(t+1)$ are non-degenerate intervals with $S_a(t+1) = -S_b(t+1)$, and $m_i(t+1) \in \mbox{Int }S_i(t+1)$ for $i = a, b$.

The symmetry of the configuration at every date (conditional on past disagreement) follows by a simple recursive argument that uses the symmetry of the model. Additionally, in the symmetric case, notice that if $S_j$ is any nondegenerate belief interval and $m_i = \s_i (S_i)$, then $m_i \in \mbox{Int } (-S_j)$. Because $S_a(t+1) = -S_b(t+1)$, it follows that $m_i(t+1) \in \mbox{Int }S_i(t+1)$ for $i = a, b$.

\textbf{Step 2.} Under the common belief that play is myopic, consider any on-path $t$-history with no common action ever taken for $\tau \leqslant t$.  Suppose that $j$ plays $z$ at date $t$, and is believed to do so by $i$ as a myopic best response. Then, if $\sgn{[s_i -  m_i(t)]} = \sgn{(z)}$, it is uniquely optimal for $i$ to play $z$ from $t+1$ onward.


Because  there is no common action before $t$, Step 1 implies that $S_j(t)$ is a non-degenerate interval and $m_j(t) \in \mbox{Int }S_j(t)$. Say $j$ plays $z=+1$ at $t$; then $s_j > m_j(t)$ a.s. Additionally, $m_i(t) = - m_j(t)$ and $s_i > m_i(t)$ (given the assumption of the Step and $z = +1$). Therefore
\[
s_i + s_j > m_i(t)  + m_j(t) = 0
\]
almost surely with respect to $i$'s beliefs. This belief cannot change on-path, so it is a best response for $a$ (myopic or not) to take action $+1$ at all periods thereafter. \qed

\textbf{Step 3.} For any $t\geqslant 0$ consider any on-path $t+1$-history in which a common action is played for the first time at $t$. Then myopic strategies prescribe the play of that common action from $t+1$ onwards for both players, and this is an equilibrium regardless of players' discount factors. 

Say the common action is $+1$. Because each $i$ plays $+1$ at date $t$ under her myopic strategy, it must be that  $s_i > m_i(t)$ a.s. Now  the conditions of Step 2 are satisfied, so the conclusion of that Step must hold. So the myopic strategy is a best response for each player regardless of their discount factors. \qed

Returning to the proof of the theorem, suppose that myopic play is not an equilibrium. Then there is a first date $t$ and an on-path $t$-history with the property that  $a$ (say) has a profitable deviation. By Step 3, it must be that the $t$-history in question has no record of common play by the two agents. By Step 1, it must be that at date $t$, agents have thresholds $m_a(t)$ and $m_b(t)$ with $m_a(t) + m_b(t)=0$. Choose player indices so that $m_a(t) \leqslant 0 \leqslant m_b(t)$, and  $s_a < 0 <  s_b$.

\begin{figure}[t!]
\begin{center}
\hspace*{-0.25in} 
\subfloat[{\small Case 1}]{
\includegraphics[width=0.43\textwidth]{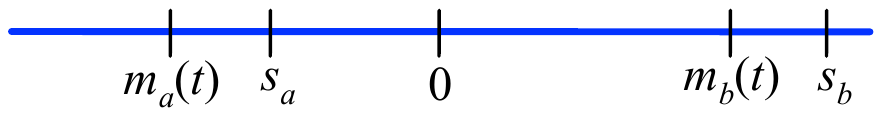}} 
\subfloat[{\small Case 2}]{
\includegraphics[width=0.43\textwidth]{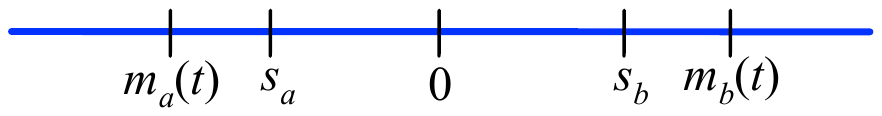}}
\end{center}
\caption{Cases 1 and 2 in the proof of Theorem \ref{myopic_equil}.}
\label{fig:1}
\end{figure}

Suppose that $s_a > m_a(t)$. (An entirely parallel argument will apply if the opposite inequality holds.) Then it is optimal under myopic play for $a$ to choose $+1$. Suppose $a$ plays $-1$ instead. On today's expected payoff she makes a weak loss. To evaluate expected \emph{future} payoffs, consider two cases (use Figure \ref{fig:1} to accompany the discussion):
 
Case 1. Player $b$ (playing the myopic strategy) plays $+1$ at date $t$, and so $s_b \geqslant m_b(t)$.  So Step 2 applies in this case with $i=a$ and $z = +1$, and $a$ therefore optimally plays $+1$ forever afterwards. But $a$ would have done the same in this case \emph{even if she had not deviated} at $t$. So given the expected weak loss at date $t$ from choosing $-1$, her deviation cannot be profitable overall.

Case 2.  Player $b$ (playing the myopic strategy) plays $-1$ at date $t$, and so $s_b \leqslant m_b(t)$. Recall that Player $a$ has played $-1$ also, which is compatible with on-path play, and believed to be so (albeit erroneously) by player $b$. Step 2 now applies to player $b$, with $i=b$ and $z = -1$, who will now proceed to play $-1$ for ever after, assuming that $a$'s future play remains compatible with myopic play.

What has player $a$ learnt from this deviation? She now knows that $s_b \leqslant m_b(t)$ (from the play of $-1$ by $b$) but she could have learnt that anyway, even by not deviating. Consequently, there is no gain for player $a$ deviating to $-1$, \emph{unless she is willing to move again to} $+1$. If she does so at some date $\tau > t$, $b$ is faced with an unexpected move, and adopts a new belief according to Variants B.1 or B.2.

Under Variant B.1, the unexpected deviation is ignored, beliefs are not updated, so player $b$ continues to play $-1$ at all dates. Additional deviations by $a$ make no difference to this outcome. Under Variant B.2, player $b$ will discard the immediately preceding play(s) of $-1$ by player $a$, and update her prior at the start of date $t$ by regarding the latest play of $+1$ as on-path. At date $\tau + 1$, then, player $a$ is back to where she would have been at the start of date $t+1$, had she not made her original deviation. All learning has therefore been pushed back by at least one period, which cannot be profitable, either because of discounting, or the possibility that $T$ is finite (or both).\footnote{Also note that any additional deviation returns player $b$ to playing $-1$ forever, and learning stops again.}
\end{proof}

\subsection*{An Example That Illustrates Theorem \ref{myopic_equil}}
Assume $t=0,1,2,3$. Consider a myopic strategy profile. Under this strategy profile, and neglecting  zero signals, $\sgn (z_i(0))= \sgn (s_i)$ for each $i$. So their actions will reveal the signs of their signals immediately. If $z_a(0)=z_b(0)$, it will be optimal for both players to take this common action in the three remaining periods (as in Step 3 of the proof).

Now suppose (without loss) that  $z_a(0) = -1$ and   $z_b(0) = +1$, so that $s_a<0$ and $s_b>0$. Now consider $a$'s strategy at date $t=1$. If her signal $s_a$ is close to 0, then (knowing that $s_b$ is strictly positive but not its exact location), she will conclude that the expectation of the state is strictly positive, and she will choose $+1$ at date 1. That expectation strictly declines as $s_a$ becomes more negative. Indeed, there must exceed a threshold signal --- call it $m_a(1) < 0$ --- at which her expectation of the state drops to zero. This stems not just from her own reduced signal but also from the resulting downgrade in her prior over $b$'s signal. By exactly the same logic, $b$ has a threshold $m_b (1)> 0$. By the symmetry of the problem, $m_b(1)=-m_a(1)$. 

This symmetry shows that if players take a common action at date 1, then once again they will persist with that action in dates 2 and 3. For instance, if $(+1,+1)$ is played at date 1, this shows that $s_a > m_a(1)$ and $s_b > m_b(1)$, so that the sum of the signals is commonly believed to exceed $m_a(1) + m_b(2)$, which by symmetry is 0. (This is Step 3 of the proof again.)

Otherwise players $a$ and $b$ choose different actions at both dates 0 and 1.  Then players will choose new thresholds in period 2, and there is still no clear-cut action independent of the type (as in Step 1 of the proof). However, in this simple four-period setting it becomes clear that in periods 2 and 3 there is no further incentive to deviate from myopic play.

It remains  to rule out non-myopic play in periods 0 and 1. First consider date 1. Suppose that myopic play calls for $(-1, +1)$, which means that $s_a < m_a(1)$ and $s_b > m_b(1)$. If player deviates to $+1$, player $b$ will see the common action $(+1, +1)$ played at date 1. Nothing about this is off-path from player $b$'s perspective, but she will now wrongly believe that $s_a > m_a(1)$. She will therefore intend to play $+1$ in both periods 2 and 3. Thus player $a$ has learnt nothing about player $b$'s signal over and above what she would have learnt had she simply not deviated --- in addition, she would have done a better job of matching the state (given her own belief that the state has negative expectation). 

Because the game ends in period 4, there is not even any need to impose discipline on off-path beliefs. Player $b$'s mind is already made up for date 2, and even if Player $a$ can surprise her at the end of that date, there is no point in doing do as they will move simultaneously at the final date 3.

However, that discipline does play a role for period-0 deviations. Recall that $s_a < 0$ and $s_b > 0$, so myopic play dictates $(-1, +1)$ in that period. If player $a$ deviates to $+1$, the same logic that we employed for a period 2 deviation ensures that player $b$ will now intend to play $+1$ for all three remaining periods (Step 2 of the proof again). Once more, player $a$ is about to learn nothing, \emph{and} has mismatched the state at date 0.

Player $a$'s only hope of learning anything more about the state is to switch back to playing $-1$ at date 1. Now \emph{this} surprises player $b$ as it is off-path. As posited by us, her reaction  can be of two kinds:

(B.1) Her beliefs are inertial: she ignores the reswitch as a tremble. But then player $b$ will continue to play $+1$ as before in the remaining three periods, and player $a$'s original deviation is unprofitable.

(B.2) She resets her beliefs about Player $a$ by respecting the current move ($-1$) and discarding her pervious update. Now it is as if they are informationally back at the end of period 0 again. Player $a$ could have been in this position by not having deviated to begin with. Additionally, she would have matched the expected state in date 0. Finally, she would have had three periods more to learn about player $b$'s signal, rather than just two after the double deviation.

These informal arguments illustrate the more general Theorem \ref{myopic_equil}.


\begin{definition}
The strategies are symmetric if the following holds. For any history $h(t)=(+1,-1,..)$ let a player of type $s$ choose $z_t\in\{-1,+1\}$. Then, for the ``flipped history'' $h'(t)=(-1,+1,..)$ a player of the ``flipped'' type $s'=-s$ should choose $-z_t$. The same holds after the empty history $h=h'=\emptyset$.
\label{def_sym}
\end{definition}

In the following we define the simplest non-monotonic strategy which we call a \textit{two-threshold strategy} where a player uses two distinct thresholds in the same period after at least one history.

\begin{definition}
Under a two-threshold strategy, after each history $h(t)$ (that includes a null history) player $i$ either plays according to a single (history-dependent) threshold $\mu_i(t)$ defined earlier, or according to two (history-dependent) thresholds $\{\mu_i^1(t),\mu_i^2(t)\}$, $\mu_i^1(t)<\mu_i^2(t)$, with the interpretation that an action $z_i'\in\{-1,+1\}$ is chosen for $s_i<\mu_i^1(t)$ and $s_i>\mu_i^2(t)$, and the other action $-z_i'$ is chosen for $s_i\in(\mu_i^1(t),\mu_i^2(t))$. The specification at $s_i=\mu_i^1(t)$ and $s_i=\mu_i^2(t)$ is arbitrary. There must exist at least one history after which the history-dependent strategy prescribes using two thresholds in the same period.
\label{two-thresh}
\end{definition}


\begin{proposition}
Under symmetry, in a game with discounting $($i.e. $\delta_a\in(0,1),\delta_b\in(0,1)$$)$ myopic equilibrium is the unique equilibrium in the class of symmetric threshold strategies and symmetric two-threshold strategies.
\label{sym_thresh}
\end{proposition}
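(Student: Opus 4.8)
The plan is to leverage Theorem \ref{myopic_equil}, which already supplies a myopic equilibrium, so that only \emph{uniqueness} within the stated class remains. I would split this into two independent tasks: (i) rule out any non-myopic \emph{single}-threshold symmetric equilibrium, and (ii) rule out any genuine \emph{two}-threshold symmetric equilibrium in the sense of Definition \ref{two-thresh}.

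For the single-threshold task, suppose a symmetric threshold equilibrium is not myopic, and let $t$ be the earliest date at which the prescribed threshold $\mu_a(t)$ differs from the myopic threshold $m_a(t)$. By symmetry $\mu_a(t)=-\mu_b(t)$ and $m_a(t)=-m_b(t)$, and since play before $t$ is myopic, Step 1 of the proof of Theorem \ref{myopic_equil} gives the symmetric configuration $S_a(t)=-S_b(t)$ with $m_i(t)\in\mbox{Int }S_i(t)$. The crux is a \emph{no-experimentation} principle near the threshold: for $s_a$ in a small neighbourhood of $\mu_a(t)$, observing $b$'s date-$t$ action pins down $\sgn(x)$ with certainty. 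Indeed, $E_a(\mu_a(t),-\mu_a(t))=0$ by model symmetry, so truncation monotonicity of $E$ makes $a$'s posterior mean strictly positive when $b$ plays $+1$ and strictly negative when $b$ plays $-1$; both inequalities are strict and persist for $s_a$ near $\mu_a(t)$. Once $\sgn(x)$ is known, $a$ optimally plays that sign forever, and since $a$'s payoff depends only on her own actions and $x$ while $b$'s date-$t$ action is simultaneous (hence independent of $a$'s), the continuation value from $t+1$ on is independent of $a$'s date-$t$ action. Therefore these types strictly prefer the myopic action at $t$, because the current-payoff comparison is strict whenever $\mu_a(t)\neq m_a(t)$. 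This produces a positive-measure set of types prescribed the non-myopic action who strictly gain by deviating, a contradiction.

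For the two-threshold task, suppose such a strategy is an equilibrium and let $h(t)$ be the earliest history at which two thresholds are genuinely used, say by $a$, with common tail action $z_a'$ on $\{s_a<\mu_a^1(t)\}\cup\{s_a>\mu_a^2(t)\}$. Date $0$ is immediate from the self-flip symmetry of Definition \ref{def_sym} at the empty history: it forces the action at type $-s$ to be the negation of the action at type $s$, yet the two tails are interchanged by $s\mapsto -s$ and both carry $z_a'$, so it would require $z_a'=-z_a'$, which is impossible. For $t\ge 1$ the intended mechanism is dominance (Observation \ref{obs:dom}(ii)): when $a$'s belief set $S_b(t)$ is bounded, the lowest populated types have dominant action $-1$ and the highest dominant action $+1$, so they cannot share the single tail action $z_a'$; the symmetric configuration then pins $b$'s strategy as the mirror image and one shows the resulting incentives are incompatible with equilibrium.

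The step I expect to be the main obstacle is precisely this two-threshold case at $t\ge1$. The difficulty is that the symmetric configuration $S_a(t)=-S_b(t)$ is exactly \emph{knife-edge} for the dominance criterion: since $\s_i(c)=-c$ for a singleton belief, one computes $\s_i(\sup S_j(t))=\inf S_i(t)$ and $\s_i(\inf S_j(t))=\sup S_i(t)$, so the entire populated type range coincides with the non-dominant gap and Observation \ref{obs:dom}(ii) is vacuous along the symmetric path. Worse, a genuine two-threshold move induces \emph{non-convex} belief sets, so Observation \ref{obs:thresh} fails and the clean ``learns $\sgn(x)$'' property can break down in the branch where $b$ plays its own tail action. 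Resolving this requires tracking the truncated updates one period forward, where a bounded endpoint does generate a genuine dominant region, and combining dominance on the bounded side with the no-experimentation principle of the single-threshold argument on the unbounded side, so that the top threshold type comes to favour one action forever and the bottom threshold type the opposite, contradicting the common tail action $z_a'$.
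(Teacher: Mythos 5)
Your part (i) is essentially the paper's own argument: at the first non-myopic date the symmetric configuration makes the sign of the threshold type's posterior mean fully determined by $b$'s date-$t$ action alone and stable under all further on-path updating, so her continuation is independent of her own date-$t$ action and she strictly prefers the myopic action. One quibble: you say $b$'s action ``pins down $\sgn(x)$ with certainty,'' which is false --- the state's sign is never learned --- but what you actually use (the sign of the posterior mean is determined and can never flip on path) is correct. Your $t=0$ two-threshold argument is a genuine alternative to the paper's: the paper rules out a date-0 two-threshold rule by an incentive argument (for $s_a$ large, $\Pr(s_b\le -s_a\mid s_a)\to 0$ while the stage gain from $+1$ grows, so the far upper tail deviates for any $\delta_a<1$), whereas you derive a contradiction directly from the self-flip requirement of Definition \ref{def_sym} at the null history. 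Yours is cleaner and discount-free, but it leans on reading the symmetry condition as a within-player constraint at the empty history; the paper's incentive argument does not need that reading.

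The genuine gap is your two-threshold case at $t\ge 1$, which you yourself flag and leave as an unexecuted sketch. You correctly observe that the strict form of Observation \ref{obs:dom}(ii) is vacuous on the symmetric configuration, but you then conclude that one must track non-convex belief updates a period forward and splice together dominance and no-experimentation arguments. That detour is unnecessary, and the paper closes the case at date $T$ itself: since $T$ is the first non-myopic date, $S_a(T)=-S_b(T)=[\alpha,\beta]$ are still intervals, and the endpoint type $s_a=\alpha=\inf S_a(T)$ satisfies $E(\alpha,S')\le E(\alpha,-\alpha)=0$ for \emph{every} on-path continuation belief set $S'\subseteq[-\beta,-\alpha]$, with $E(\alpha,S_b(T))<0$ strictly. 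Hence playing $-1$ at $T$ yields a strict stage gain over $+1$, while no information she could ever receive makes $+1$ myopically preferred later (and her constant-action continuation payoff is unaffected by her date-$T$ action), so $-1$ is strictly optimal for her overall; symmetrically $+1$ is strictly optimal for $s_a=\sup S_a(T)$. Both endpoints lie in the two tails, which by hypothesis share the single action $z_a'$, so one of these two types is prescribed a strictly suboptimal action --- a violation of sequential rationality even though the offending set of types has measure zero. Your sketch never reaches this conclusion, and without it part (ii) is incomplete.
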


\begin{proof}
(1) Consider the play in symmetric threshold strategies.

First, in $t=0$ by definition there is a unique threshold at $0$: $m_a(0)=m_b(0)=0$.

Second, consider a history $h_T$ such that in all $t=0,..,T-1$ the play has been myopic and no agreement has yet been reached (i.e. $z_a(t)\neq z_b(t)$ for all $t<T$). At the beginning of $T$ the players correctly believe that $\text{sign}(s_a)\neq \text{sign}(s_b)$. Consider $s_a<0<s_b$ (the proof for $s_a>0>s_b$ is entirely symmetric) and assume that player $b$ uses a non-myopic strategy in the period $T$, say, $\sigma_b(S_a(T))<m_b(T)$ (a symmetric argument applies for the parallel case $\sigma_b(S_a(T))>m_b(T)$). By Definition \ref{def_sym}, player $a$ must use a non-myopic threshold $m_a(T)=-m_b(T)$ which implies that $m_a(T)<\sigma_a(A_b(T))$. However, the type $s_a'=m_a(T)$ has a strict incentive to play the myopic action $-1$. This is because independent of her action in $T$, at the end of $T$ the type $s_a'$ observes whether $s_b\geq s_a$ or $s_b\leq s_a$, and therefore starting from the period $T+1$ she has a dominant action (depending on $z_b(T)$). Symmetrically, the type $s_b'=-m_a(T)$ has a strict incentive to play the myopic action $+1$ for the similar reason. The types arbitrarily close to $s_a'$ and $s_b'$ that are expected to play non-myopically, will similarly deviate in the period $T$ to the respective myopic actions. Since $T$ could be any period following $t=0$, the play is non-myopic symmetric threshold strategies is not incentive-compatible for the players.

(2) Consider symmetric two-threshold strategies.

First, consider $t=0$ and suppose that player $a$'s strategy uses two thresholds $\{\mu_a^1(0),\mu_a^2(0)\}$ such that she chooses the same action $z_a'$ for $s_a<\mu_a^1(0)$ and $s_a>\mu_a^2(0)$, and the other action $-z_a'$ for $s_a\in (\mu_a^1(0),\mu_a^2(0))$; the specification at $s_a=\mu_a^1(0)$ and $s_a=\mu_a^2(0)$ is arbitrary (player $b$'s strategy is symmetric). Assume $z_a'=-1$ (the argument for $z_a'=+1$ is fully symmetric). Observe that $\text{Pr}(s_b\leq -s_a|s_a)\xrightarrow[s_a\rightarrow\infty]{ }0$. I.e., as $s_a$ increases, the probability that player $a$ assigns to the event that the signal of the other player is such that the myopic action is $-1$, goes to 0. At the same time, $E(s_a,S_b(0))>0$ increases in $s_a$. Thus, for any $\delta_a<1$ there exists $s_a'>0$ large enough such that all $s_a>s_a'$ have a strict incentive to choose $+1$ in $t=0$. Thus, all $s_a>\max\{\mu_a^2(0),s_a'\}$ deviate from the prescribed strategy and choose $+1$ in $t=0$.

Second, let $T>0$ be the first period with non-myopic play. Suppose that no agreement has been reached at the end $T-1$. Assume $s_a<0<s_b$ (the other case is completely symmetric). By part (1) of the proof we know that the play in non-myopic symmetric thresholds is not incentive-compatible. Therefore, consider player $a$ and two (history-dependent) thresholds $\{\mu_a^1(T),\mu_a^2(T)\}$ such that player $a$ chooses the same action on $[\inf S_a(T),\mu_a^1(T))$ and $(\mu_a^2(T),\sup S_a(T)]$ (a symmetric strategy applies to player $b$). Observe, however, that at the beginning of the period $T$ the belief sets are mirror images of each other: $S_a(T)=- S_b(T)$. This implies that the type $s_a=\inf S_a(T)$ has a strictly dominant action $-1$ and the type $s_a=\sup S_a(T)$ has a strictly dominant action $+1$. Same applies to the boundary types of the player $b$. Therefore, play in symmetric two-threshold strategies is not incentive-compatible for the players.

\end{proof}

\begin{definition}
A strategy profile leads to a complete aggregation of private information if asymptotically (as $t\rightarrow\infty$) with the exception of measure-zero cases\footnote{Under the standard Lebesgue measure.} all player types take the same action as what they would have chosen under complete information.
\end{definition}

\begin{proposition}
Consider symmetry and a game with discounting, $\delta_a\in(0,1),\delta_b\in(0,1)$. In the class of threshold strategies only the myopic equilibrium leads to a complete aggregation of private information.
\label{aggreg}
\end{proposition}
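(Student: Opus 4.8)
The plan is to prove two inclusions: (i) the myopic profile aggregates, and (ii) no non-myopic threshold profile does. The organizing fact is that under symmetry the complete-information boundary is the anti-diagonal: since $E(s_a,s_b)=E(s_b,s_a)$ and $E(-s_a,-s_b)=-E(s_a,s_b)$, the zero set of $E$ is exactly $\{s_a+s_b=0\}$, and by (E.1) the complete-information action of each player is $\sgn(s_a+s_b)$. In particular, under complete information the players never disagree, so any aggregating profile must, asymptotically and a.e., have both players playing $\sgn(s_a+s_b)$ with no residual disagreement.

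For (i) I would run the myopic dynamics. By Observation \ref{obs:myopic}(iii) perpetual disagreement is a probability-zero event, so a.e. the players reach a first common action at some finite date $t$, and by Observation \ref{obs:myopic}(i) that action is then repeated forever. It remains to verify the action is the complete-information one. Along the preceding disagreement history, Step 1 of the proof of Theorem \ref{myopic_equil} gives $m_a(t)+m_b(t)=0$; hence a first agreement on $+1$ reveals $s_a>m_a(t)$ and $s_b>m_b(t)$, so $s_a+s_b>0$ and $+1$ matches $\sgn(s_a+s_b)$ (symmetrically for $-1$). On the null set of perpetual disagreement, Observation \ref{obs:myopic}(iii) shows the belief sets shrink to the true type, so the limit is again the complete-information action. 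Thus myopic play aggregates.

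For (ii) the key structural observation is that threshold play only ever reveals \emph{intervals}: by Observation \ref{obs:thresh} each player's belief about the other is an interval, so every permanent-agreement event corresponds to a product set (rectangle) $[l_a,r_a]\times[l_b,r_b]$ of type pairs on which both players are committed to a single action. For such a committed rectangle to be error-free it must lie in one half-plane, i.e.\ its lower-left corner must satisfy $l_a+l_b\ge 0$ for a $+1$-commitment (and its upper-right corner $\le 0$ for $-1$). Under myopic play the first-agreement corner lies exactly on the anti-diagonal ($m_a+m_b=0$), so the agreement rectangles tile the two half-planes correctly. I would then show that any threshold profile carrying a non-myopic threshold produces a committed rectangle whose corner lies strictly off the anti-diagonal, hence a positive-measure set on the wrong side. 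Concretely, let $T$ be the earliest on-path date (no prior agreement) at which a threshold is non-myopic; by Step 1 of Theorem \ref{myopic_equil} the date-$T$ configuration is symmetric with $m_a(T)+m_b(T)=0$. Taking (w.l.o.g., using the model's flip symmetry) $\mu_a(T)>m_a(T)$, the interval $s_a\in(m_a(T),\mu_a(T))$ consists of types that play $-1$ although $E(s_a,S_b(T))>0$; pairing these with $s_b\in(-s_a,\,-m_a(T))$ gives a positive-measure set with $s_a+s_b>0$ on which player $a$ plays the wrong action at $T$.

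The main obstacle is \emph{persistence}: the date-$T$ error just exhibited is non-dominant for $a$ (since $s_a>m_a(T)$), so Observation \ref{obs:dom} does not freeze it, and an arbitrary threshold profile could revise it later; because aggregation is only an asymptotic requirement, I must rule out self-correction rather than merely display a current mismatch. I would handle this at the level of the \emph{limit}: since on-path belief sets only shrink, the asymptotic action regions are limits of the interval-products above, and matching their common boundary to the anti-diagonal a.e.\ forces every agreement corner onto the anti-diagonal, which is exactly the myopic alignment $m_a+m_b=0$ from direction (i). A non-myopic threshold violates this at a positive-measure corner neighborhood that cannot be repaired by further truncations, which only move corners inward, so a positive-measure wrong-action set survives in the limit. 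The delicate cases to dispatch are the asymmetric profiles (where $\mu_b(T)$ is unconstrained) and perpetual-disagreement paths, for which I would combine the monotonicity of $E$ with the $S_i(t_k)$-shrinking argument of Observation \ref{obs:myopic}(iii) to show that either some committed rectangle straddles the anti-diagonal or a positive-measure set of types never resolves its disagreement---both contradicting aggregation.
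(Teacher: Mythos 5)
Your direction (i) is fine and matches the paper's (much terser) argument: no perpetual disagreement, agreement persists, and the symmetric alignment $m_a(t)+m_b(t)=0$ makes the first common action coincide with $\sgn(s_a+s_b)$. The genuine gap is in direction (ii), and it is exactly the ``persistence'' obstacle you yourself flag: your proposed resolution does not work, because it nowhere uses the fact that the threshold profile is sequentially rational, and without that hypothesis the claim you are arguing is false. Consider the profile that plays myopically at date $0$, repeats the threshold $0$ at date $1$ after disagreement (revealing nothing), and from date $2$ onward replays the myopic thresholds with a one-period lag. This is a non-myopic threshold profile whose belief dynamics are a time-shifted copy of the myopic ones, so it aggregates completely; it is excluded only because types in $(m_a(1),0)$ would refuse to absorb a current-period loss in exchange for zero informational gain. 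So any correct proof of (ii) must contain an incentive step, and yours has none. Relatedly, your assertion that further truncations ``only move corners inward'' and therefore ``cannot repair'' the date-$T$ mismatch is backwards: inward-moving corners are precisely how a transient mismatch is repaired --- the offending rectangle is subdivided at later dates and each piece can be routed to the correct limit action. Exhibiting a wrong action at date $T$ for $s_a\in(m_a(T),\mu_a(T))$ proves nothing asymptotic, since that action is neither dominant nor committed. (You also leave open the case in which non-myopic thresholds occur only after agreement histories, where the alignment $m_a(T)+m_b(T)=0$ is unavailable.)

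The paper closes exactly this gap with an incentive argument (Claim \ref{cl1}). A type who plays non-myopically at $t_0$ must be compensated by future information from the opponent; tracking the induced monotone, bounded sequence of the opponent's interior thresholds and comparing the vanishing value of the remaining information against the non-vanishing per-period myopic loss (this is where $\delta_i<1$ enters), the paper shows that the sequence must be finite, i.e.\ the process reaches agreement at a \emph{finite} date $T$ with $m_a(t_0)<-m_b(T)$, after which the agreed action is locked in as myopically optimal forever. The resulting committed rectangle $(m_a(t_0),-m_b(T))\times(m_b(T),-m_a(t_0))$ straddles the anti-diagonal, so a positive-measure set of types is permanently wrong. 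That finite-time lock-in, derived from optimality rather than from the geometry of truncations, is the missing ingredient in your proposal.
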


The proof, which is is relegated to the Appendix, is based on the following observations. First, under the myopic play disagreement cannot continue forever: this holds by Observation 3 $(iii)$. Second, agreement is based on weakly dominant strategies: all player types choose the same actions under myopic agreement as the actions they would have chosen under complete information. In contrast, the play in non-myopic thresholds fails to fully aggregate private signals: there always exist measurable sets of player types who choose a wrong action upon agreement. 

\section{Asymmetry}
Our setting is \emph{asymmetric} if it is not symmetric.  Of course, that admits a huge variety of cases. We will impose the following generic condition on an asymmetric model. Note that in a symmetric setting, $-\s_a (\R_+) = \s_b (\R_-)$, and in particular, $E(\s_a (\R_+), \s_b (\R_-)) = 0$. By an asymmetric model, we refer to \emph{any} situation in which
\begin{equation}
E(\s_a (\R_+), \s_b (\R_-)) \neq 0.
\label{eq:neq}
\end{equation}
\begin{theorem}
In an asymmetric model, myopic play is never an equilibrium in the infinite setting.
\label{th:antim}
 \end{theorem}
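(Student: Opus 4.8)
The plan is to exhibit a positive-probability on-path history at which one player has a strictly profitable deviation whose current-period cost vanishes while its continuation gain stays bounded away from zero, the latter being possible only because the horizon is infinite. The engine of the argument is that a player's stage payoff $z_i x$ does not depend on the opponent's action: the opponent's play matters \emph{only} as information. Consequently the deviator's value is monotone in her own information, and off-path beliefs affect her payoff only through what the opponent subsequently reveals --- which is why the conclusion holds for \emph{any} specification of off-path beliefs.

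I would work on the on-path history in which the players disagree at date $0$ (say $a$ plays $-1$ and $b$ plays $+1$), an event of positive probability. Then $S_b(1)=\R_+$ and $S_a(1)=\R_-$, with date-$1$ myopic thresholds $m_a(1)=\sigma_a(\R_+)$ and $m_b(1)=\sigma_b(\R_-)$; the truncation monotonicity of $E$ gives $m_a(1)<0<m_b(1)$. Assume without loss of generality that \eqref{eq:neq} holds with positive sign, $E(m_a(1),m_b(1))>0$ (the opposite sign is handled by the mirror-image argument interchanging $+1$ and $-1$). By continuity of $E$ there is then a \emph{positive-measure} ``wrong-agreement'' set of pairs with $s_a<m_a(1)$ and $s_b\in[0,m_b(1))$ on which $E(s_a,s_b)>0$. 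Under myopic play both such types play $-1$ at date $1$, agree, and by Observation \ref{obs:myopic}(i) play $-1$ at every later date --- the \emph{wrong} action, with no further learning, since an agreeing $b$ reveals nothing more.

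Fix a type $s_a=m_a(1)-\eta$, $\eta>0$ small, and let her deviate by playing $+1$ at date $1$. Since a type above $m_a(1)$ would play $+1$ on-path, $b$ treats this as on-path and merely infers (wrongly) that $s_a\in[m_a(1),0]$, dropping her date-$2$ threshold to $\sigma_b([m_a(1),0])<m_b(1)$. The comparison with myopic play decomposes cleanly. On the event $b$ plays $+1$ at date $1$ the deviator learns $s_b>m_b(1)$ in either case, and her continuation from date $2$ is $+1$ forever in both, so the two strategies differ only through the date-$1$ action there. On the event $b$ plays $-1$ at date $1$, myopic play freezes the deviator's information, whereas under the deviation $b$ remains on-path and keeps revealing; her information strictly refines the myopic one, so by information-monotonicity her continuation value is weakly larger, and \emph{strictly} larger on the wrong-agreement set, where the extra information flips her action to the correct $+1$.

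Putting the pieces together, the total date-$1$ cost equals $2\abs{E(s_a,\R_+)}$, which is $O(\eta)\to0$ as $\eta\to0$, while the continuation gain is concentrated on the wrong-agreement set; there the per-period correction is $\approx 2E(s_a,s_b)$, bounded below on a compact sub-set, and --- because the horizon is infinite and $\delta_a>0$ --- its present value is bounded below by a positive constant independent of $\eta$. Hence for $\eta$ small the gain exceeds the cost, contradicting the optimality of myopic play. The main obstacle is the continuation-learning step on the event that $b$ plays $-1$: one must show that, facing a fooled but on-path $b$, the deviator can refine $S_b$ within boundedly many periods so as to determine the sign of $E(s_a,s_b)$ on a positive-measure piece of the wrong-agreement set, and bound the resulting present-value gain below uniformly in $\eta$ while controlling any interim mismatch losses. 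I would obtain the refinement by an argument parallel to Observation \ref{obs:myopic}(iii), the belief sets along the fooled play forming a shrinking nested sequence whose intersection is a.s.\ a single point; the infinite horizon is exactly what keeps the gain from being swamped by the learning delay, which is why the statement fails --- and is not asserted --- in the finite setting.
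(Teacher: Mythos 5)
Your proposal follows the same architecture as the paper's proof of Theorem \ref{th:antim}: the same on-path history (disagreement at date $0$), the same deviating type $s_a=m_a(1)-\eta$ playing $+1$ at date $1$, the same case split on $b$'s date-$1$ action, the same identification of the ``wrong-agreement'' region (pairs with $s_a$ just below $m_a(1)$, $s_b$ below $m_b(1)$, and $E(s_a,s_b)>0$) as the sole source of the continuation gain, and the same limit comparison of an $O(\eta)$ date-$1$ cost against a gain bounded away from zero. The date-$1$ accounting and your Case (I) are handled essentially as in the paper.

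The gap is exactly where you flag it, and the fix you sketch does not close it. The inference ``under the deviation $b$ remains on-path and keeps revealing; her information strictly refines the myopic one, so by information-monotonicity her continuation value is weakly larger'' is not valid as stated: the deviator does not receive the refined information for free, but only by playing a specific sequence of actions against an opponent who now holds a \emph{wrong} belief $S_a(2)=[m_a(1),0]$ that excludes the deviator's true type. That sequence must (i) carry no interim mismatch cost relative to the frozen-information benchmark and (ii) remain interpretable by the fooled $b$ as on-path at every round --- otherwise $b$'s off-path beliefs take over and the revelation stops. An argument ``parallel to Observation \ref{obs:myopic}(iii)'' does not supply this: that observation concerns mutual myopic play under correct beliefs, delivers only asymptotic (not finite-time) shrinkage, and says nothing about whether a type lying \emph{outside} $b$'s belief set can mimic some believed type at each stage. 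The paper resolves precisely this with the explicit recursion $g(r)=\s_b([m_a,r])$, $h(s)=\s_a([s,m_b])$, $(r^{k+1},s^{k+1})=(h(s^k),g(r^k))$ and the finite exit index $K$ with $r^K<m_a$: for rounds $k<K$ the deviator's myopic action is $-1$, coinciding with the no-deviation path (so no interim loss) and consistent with some believed type in $[m_a,h(s^k)]$ (so on-path for $b$); at round $K$ her switch to $+1$ is again consistent with $b$'s perception, and the gain $\Delta$ arrives discounted by the \emph{fixed} factor $\delta^K$, which is what makes the lower bound uniform in $\eta$ and identifies exactly which finite horizons kill the deviation. Without this construction, or a substitute establishing finite-time, on-path extraction of the sign of $E(s_a,s_b)$ on a positive-measure set with controlled interim payoffs, the central inequality of your final paragraph is asserted rather than proved.
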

\begin{proof}
We will work throughout with the case $s_a < 0$ and $s_b > 0$, presuming that myopic play is adhered to by one of the players, and showing that a profitable deviation is available for some interval of the other player's types, at date 1. 

Suppose that we have myopic play at date 0.
Then at date 1, the relevant configuration is given by
$\{ S_a(1), S_b(1), m_a(1), m_b(1)\} = \{ \R_+, \R_-, m_a, m_b\}$, 
where 
$E(m_a, \R_+) = 0$ and  $E(\R_-, m_b)=0$, so that $m_a = \s_a (\R_+) < 0$ and $m_b = \s_b (\R_-)>0$. The values $m_a$ and $m_b$ are depicted in Figure \ref{fig:asymm}, along with other constructions soon to follow.

\begin{figure}[t!]
\hspace*{-0.1in}
\subfloat{
\includegraphics[width=0.47\textwidth]{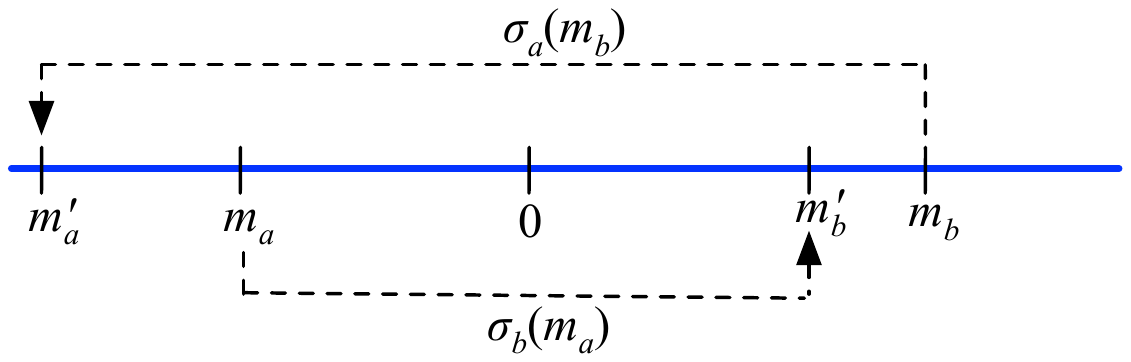}} \hspace*{0.1in}
\subfloat{
\includegraphics[width=0.5\textwidth]{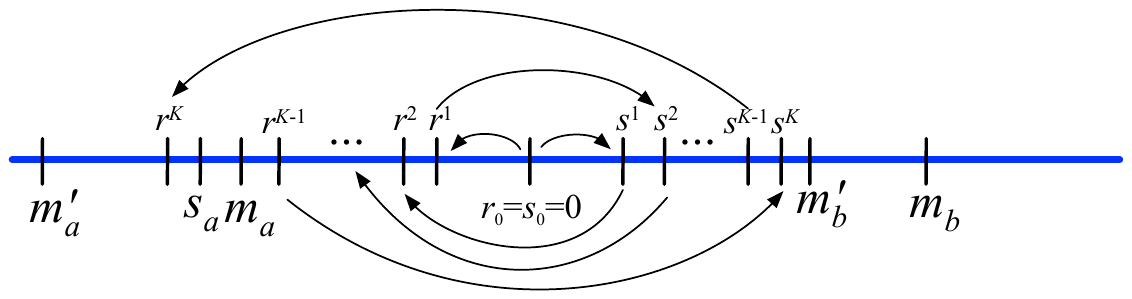}}
\caption{An Illustration of the proof of Theorem \ref{th:antim}.}
\label{fig:asymm}
\end{figure}

Under asymmetry, (\ref{eq:neq}) holds and $E(m_a, m_b) \neq 0$.
Define $(m'_a, m'_b)$ by $E(m'_a, m_b) = E(m_a, m'_b) = 0$; then $m'_a \neq m_a$. Let us suppose that $m'_a < m_a$ (the opposite inequality has a parallel argument indicated at the end of the proof). Then, because $E$ is increasing in each argument, it must be that $m'_b < m_b$. See {left panel of} Figure \ref{fig:asymm}. 

For $r \in [m_a, 0]$, define $g(r)$ by $\s_b ([m_a, r])$, and for $s \in [0, m'_b]$, define $h(s)$ by $\s_a ([s, m_b])$. Because $m_b' = \s (m_a)$, it must be that $g(r) < m'_b$ as long as $r > m_a$. On the other hand, because $E(m_a, [m'_b, m_b]) > E(m_a, m'_b)=0$, we have $h(s) < m_a$ for $s$ close enough to $m'_b$. Therefore if we set $(r^0, s^0) = (0,0)$ and then recursively define  $(r^{k+1}, s^{k+1}) = (h(s^k), g(r^k))$, the first index $K$ at which this sequence wanders out of $[m_a, 0] \times [0, m'_b]$ must be one at which $r^K < m_a$ (while $s^K$ is still below $m'_b$).  It should be noted that $K \ge 2$, because $m_a = \s_a (\R) < \s_a([0, m_b]) = h(0)$. Fix any $T\ge K+1$.

For $\e > 0$, define $s_a = m_a - \e $. Obviously, for $\e > 0$ but small enough,   
\begin{equation}
s_a > m'_a \mbox{ and } s_a > r^K.
\label{eq:sam}
\end{equation}
The right panel of Figure \ref{fig:asymm}. depicts this construction. Now 
for any $\e >0$, let $\pi (\e)$ be the date 1 expected payoff loss to type $s_a=m_a-\e$ from playing $+1$ as opposed to $-1$. Of course,  $\pi (\e) \to 0$ as $\e \to 0$. Now consider two possibilities regarding the realization of $b$'s type:


(I) $s_b > m_b$, say with probability $q \in (0,1)$, computable from model parameters. Then $b$ plays +1 at date $1$, so $S_b(t) \subseteq [m_b, \infty )$ for all $t \ge 2$. Because $s_a > m'_a = \s_a (m_b)$ (see (\ref{eq:sam})), $a$ and $b$ will play $+1$ from date $2$ onward, \emph{no matter what $a$ has played at date $1$}. Therefore, the payoff loss to $a$ from choosing $+1$ as opposed to the myopic best response $-1$, at date 1,  is precisely $\pi (\e)$ in this sub-case.

(II)  $s_b < m_b$, with probability $[1-q] \in (0,1)$ In this case, myopic play by either type dictates -1 (remember $s_a < m_a$). If played, then by  Observation \ref{obs:myopic}(i), both parties will play $-1$ thereafter.

Suppose that type $s_a$ deviates and plays $+1$ instead. Now $b$ will believe (erroneously) that  $S_a(2) = [m_a, 0]$, while because $b$ has played her myopic optimum $-1$, $a$ will believe (correctly) that $S_b(2) = [0, m_b]$.  Consider a continuation strategy in which $a$ plays myopic optima from date $2$ onward, while $b$ as already assumed plays myopic best responses throughout. We wish to compare the payoff from this deviation with the play of $-1$ throughout, as already established under the myopic equilibrium. To do so, note the following properties of myopic play from this stage on:

(i) For $1 \le k < K$, all types $s'_a \le h(s^k)$ \emph{can be rationally expected by} $b$ to be playing $-1$ at date $1+k$, while the remaining possible types $s'_a \in (h(s^k), 0]$ play +1. This perception does not contradict on-path play in the eyes of $b$, because $0 > h(s^k) \ge m_a$ for $k < K$, by construction of $K$.\footnote{A non-generic possibility is $h(s^{K-1})$ \emph{equal} to $m_a$, but even here there is no contradiction to perceived on-path play, as type $m_a$ could be rationally expected to play $-1$.}

(In particular, the type $s_a$ plays $-1$ at date 2, because $h(0) > m_a>s_a$ as already discussed.)

(ii) For $1 \le k < K$, all types $s'_b > g(r^k)$ will play (and will be rationally expected by $a$ to play) $+1$ at date $1+k$, while the remaining possible types $s'_a \in [0, g(r^k)]$ play -1. This play is trivially on-path as $b$ never deviates from myopic play. 

(iii)  Every $s_b$ in $[m'_b, m_b]$ will play $+1$ throughout this phase, because $m'_b > g(r^k)$ for all $1\le k < K$, as already discussed.

In terms of actual play over rounds $1, \ldots , K-1$ (or dates $2, \ldots , K$) ,  type $s_a$ will play $-1$ throughout. Some $b$-types play $-1$ throughout, some play $+1$ throughout, and the remainder choose $+1$ for some initial rounds, then switch to playing $-1$. \emph{As soon as} $b$ plays $-1$, there is agreement, and continuation play must be $-1$ till the end of the game, by  Observation \ref{obs:myopic}(i). 

Otherwise, if $b$ plays $+1$ through these rounds, then it is established that $s_b \ge s^K = g(r^{K-1})$ for in $b$'s perception, all types of $a$ that are larger than $r^{K-1}$ have been eliminated by the observation that $a$ has played $-1$ throughout. But then, because $s_a > r^K$ by (\ref{eq:sam}), it must be that $s_a$ plays $+1$ at round $K$ (or date $K+1$). Moreover, every type $s_b \in [m'_b, m_b]$ will choose $+1$ as well. From this point on to the end of the game, continuation play must be $+1$ by  Observation \ref{obs:myopic}(i). 

Notice that under this last event, there is a payoff gain to player $a$ from choosing $+1$ as opposed to $-1$, under the additional information that she has received by virtue of her deviation at date $1$. Denote this gain by $\Delta (s_a)$. It will occur with probability at least $q'$, where $q' $ is the probability that $s_b \in [m'_b, m_b]$, conditional on $s_b \in [0, m_b]$.   

Recalling that $s_a = m_a - \e$, the expected gain from player $a$'s deviation, discounted to date $1$, and aggregating over both cases I and II, is therefore bounded below by
\[
-\pi (\e ) + \delta^K q'(1-q) \Delta (s_a) = -\pi (\e ) + \delta^K q'(1-q) \Delta (m_a-\e).
\]
It is obvious that $\Delta (m_a-\e)$ is bounded away from 0 in $\e $,\footnote{The gain $\Delta $ to type $m_a$ from knowing that $s_b \in [m'_b, m_b]$ is strictly positive, because myopic play cannot help her achieve this additional, useful information. As $e \to 0$, $\Delta (m_a-\e) \to \Delta$.}  while as already observed, $\pi (\e) \to 0$ as $\e \to 0$.  Therefore there is an interval of types for which player $a$ enjoys a profitable deviation from myopic play.

We end the proof by noting that if $m'_a > m_a$, then $m'_b > m_b$, and the entire argument now works in mirror image by choosing a suitable interval of $b$'s types between $m_b$ and $m_b + \e$.

\end{proof}

The proof reveals that players' types close to the myopic threshold have a strict incentive to deviate: their losses from a deviation can be held arbitrarily close to 0, while with a strictly positive probability they expect to obtain additional information about the types of the other player -- the information that they would forgo otherwise, would they adhere to the myopic play.

Importantly, by a deviation from the myopic play a player has to wait a number of periods before obtaining useful information about the types of the other player. This number of periods becomes larger the more similar the players are. In this sense, for sufficiently similar players myopic equilibrium exists in finite asymmetric games, but it never exists in asymmetric games with the infinite duration.

\section{Conclusions}

This paper characterizes the equilibrium behavior of forward-looking players in a simple two-player repeated-action setting of social learning. In the symmetric environment myopic equilibrium always exists for any positive discount factors. We showed that this equilibrium is unique in symmetric threshold strategies and under simplest non-monotonic
strategies. In the symmetric setting the players cannot disagree forever. Once the agreement is reached, the players choose correct actions -- the same actions that they would choose under complete information. Therefore, myopic play fully aggregates dispersed private information. We have shown that any other threshold strategy fails to fully aggregate private information.

The matters are different when the setting is asymmetric. Here, there is a learning potential even if an agreement is reached. If a player deviates from the myopic play and pretends to be a different type, while the other player presumes that the play continues to be on path, the deviating player expects to gain additional information from the other player with a strictly positive probability.

There are multiple venues for future exploration. First, are there any equilibria in the symmetric setting in arbitrary symmetric non-threshold strategies, and do equilibria in non-symmetric strategies exist? Allowing for arbitrary non-threshold strategies would allow for rich possibilities of codifying private signals, but the question of incentive-compatibility remains open.

Second, what is the solution of a planner in a symmetric setting, when the planner aims to maximize the overall welfare? It would be interesting to know whether the resulting allocation improves upon the decentralized myopic equilibrium.




\begin{spacing}{1.3}

\bibliographystyle{aea}

\bibliography{bib_learning}

\end{spacing}


\section*{Appendix}

\noindent\textit{Illustration of the model using normal distributions}

Suppose that $x$ is drawn from $N(x_0,\sigma^2)$ with $x_0=0$. The agents observe signals $s_a=x+\epsilon_a$, $s_b=x+\epsilon_b$, where $\epsilon_a\sim N(0,\sigma_a^2)$ and $\epsilon_b\sim N(0,\sigma_b^2)$. Therefore, the corresponding distributions are
\[x\sim N(x_0,\sigma^2),\ s_a|x\sim N(x,\sigma_a^2),\ s_b|x\sim N(x,\sigma_b^2).\]


We first derive $e(s_1,s_2)$, the posterior expectation of $x$, using Bayes rule: 
\[
f(x|s_a,s_b)=\frac{f(x)g_a(s_a|x)g_b(s_b|x)}{\int_{x=-\infty}^{+\infty} f(x)g_a(s_a|x)g_b(s_b|x) dx}=
\]
\[
\frac{\frac{e^{\frac{-(x-0)^2}{2\sigma^2}}}{\sqrt{2\pi \sigma^2}}\frac{e^{\frac{-(s_a-x)^2}{2\sigma_a^2}}}{\sqrt{2\pi \sigma_a^2}}\frac{e^{\frac{-(s_b-x)^2}{2\sigma_b^2}}}{\sqrt{2\pi \sigma_b^2}}}{\int_{x=-\infty}^{+\infty}\frac{e^{\frac{-(x-0)^2}{2\sigma^2}}}{\sqrt{2\pi \sigma^2}} \frac{e^{\frac{-(s_a-x)^2}{2\sigma_a^2}}}{\sqrt{2\pi \sigma_a^2}}\frac{e^{\frac{-(s_b-x)^2}{2\sigma_b^2}}}{\sqrt{2\pi \sigma_b^2}} dx}=
\frac{e^{\frac{-(x-\frac{\sigma^2(s_a\sigma_b^2+s_b\sigma_a^2)}{\sigma_a^2\sigma_b^2+\sigma^2(\sigma_a^2+\sigma_b^2)})^2}{2\left( \frac{1}{\sigma^2}+\frac{1}{\sigma_a^2}+\frac{1}{\sigma_b^2}\right)}}}{\sqrt{2\pi \left(\frac{1}{\sigma^2}+\frac{1}{\sigma_a^2}+\frac{1}{\sigma_b^2}\right)}}.
\]
\medskip
The mean of the above distribution is $
e(s_a,s_b)=\frac{\sigma^2(s_a\sigma_b^2+s_b\sigma_a^2)}{\sigma_a^2\sigma_b^2+\sigma^2(\sigma_a^2+\sigma_b^2)},
$
and the variance is $
\frac{1}{\sigma^2}+\frac{1}{\sigma_a^2}+\frac{1}{\sigma_b^2}.
$
An alternative way of formulating the posterior mean of $x$ is to assume that $a$ has a posterior distribution over $s_b$. At the beginning of the game, $a$ observes $s_a$. She then forms a posterior belief about the distribution of $x$, that is
\[
f(x|s_a)=\frac{f(x)g_a(s_a|x)}{\int_{x=-\infty}^{+\infty}f(x)g_a(s_a|x)dx}=e^{\frac{-(x-\frac{s_a\sigma^2}{\sigma^2+\sigma_a^2})^2}{2\frac{\sigma^2\sigma_a^2}{\sigma^2+\sigma_a^2}}}{\sqrt{2\pi \frac{\sigma^2\sigma_a^2}{\sigma^2+\sigma_a^2}}}.
\]
Now, we can derive the posterior distribution $f(s_b|s_a)$ at the beginning of $t=0$ i.e., after $s_a$ is observes by player $a$, and no action of player $b$ has been observed.
\[
\phi(s_b|s_a)=\int_{x=-\infty}^{+\infty}f(s_b|x)f(x|s_a)dx=\frac{e^{\frac{-(s_b-\frac{s_a\sigma^2}{\sigma^2+\sigma_a^2})^2}{2\frac{\sigma_a^2\sigma_b^2+\sigma^2(\sigma_a^2+\sigma_b^2)}{\sigma^2+\sigma_a^2}}}}{\sqrt{2\pi \frac{\sigma_a^2\sigma_b^2+\sigma^2(\sigma_a^2+\sigma_b^2)}{\sigma^2+\sigma_a^2}}}.
\]
So, for example, in $t=0$ the expected value of the state is:
\[
E[x|s_a,s_b\in\mathbbm{R}]=\int_{s_b=-\infty}^{+\infty}\frac{\sigma^2(s_a\sigma_b^2+s_b\sigma_a^2)}{\sigma_a^2\sigma_b^2+\sigma^2(\sigma_a^2+\sigma_b^2)}\frac{e^{\frac{-(s_b-\frac{s_a\sigma^2}{\sigma^2+\sigma_a^2})^2}{2\frac{\sigma_a^2\sigma_b^2+\sigma^2(\sigma_a^2+\sigma_b^2)}{\sigma^2+\sigma_a^2}}}}{\sqrt{2\pi \frac{\sigma_a^2\sigma_b^2+\sigma^2(\sigma_a^2+\sigma_b^2)}{\sigma^2+\sigma_a^2}}}ds_b=s_a\frac{\sigma^2}{\sigma^2+\sigma_a^2}.
\]
Finally, suppose that player $a$, who observes $s_a$, believes that $s_b\in S\subset\mathbbm{R}$. Then, the posterior distribution over $s_b\in S$ is $\frac{f(s_b|s_a)}{\int_{s_b\in S}f(s_b|s_a)ds_b}$, and therefore the expected value of the state $x$ is
\[
E[s_a,\phi(s_b\in S|s_a)]=\int_{s_b=-\infty}^{+\infty}\frac{\sigma^2(s_a\sigma_b^2+s_b\sigma_a^2)}{\sigma_a^2\sigma_b^2+\sigma^2(\sigma_a^2+\sigma_b^2)}\frac{f(s_b|s_a)}{\int_{s_b\in S}f(s_b|s_a)ds_b}ds_b.
\]

We make the following observations. First, note that for any non-empty $S\subset \mathbbm{R}$, $E_{s_a}[s_b|s_a,\phi(s_b\in S|s_a)]=k_{s_a}(S)$, where $k_{s_a}(S)>-\infty$ if $s_a>-\infty$, and $k_{s_a}(S)<+\infty$ if $s_a<+\infty$. But then, if $s_a\rightarrow +\infty$, then for any $S$, $\lim_{s_a\rightarrow+\infty }k_{s_a}(S)>0$ and if $s_a\rightarrow -\infty$, then for any $S$, $\lim_{s_a\rightarrow+\infty }k_{s_a}(S)<0$. This shows that $E[1]$ holds in the setup with normal distributions.

\medskip

 \noindent\textbf{Proof of Proposition \ref{aggreg}}.
Under myopic play, recall that (1) by Observation 3 $(iii)$ disagreement cannot continue forever (apart from measure-zero case of $s_a=-s_b$), and (2) under agreement the play is in weakly dominant strategies. As a result, eventually the players choose correct actions i.e., they choose the same actions as the ones they would choose under complete information.

By Proposition \ref{sym_thresh} we know that the play in symmetric threshold strategies cannot be an equilibrium. Thus, in the following we focus on non-symmetric threshold strategies.

We proceed with the following Claim.

\begin{claim}
Suppose that player $i$ uses a (history-dependent) threshold strategy $m_i(t_0)<\sigma_i(S_j(t_0))$ in some period $t_0\geq 0$, and consider the action choice $z_i(t_0)=+1$. Then, there exists a future period $T_i$ satisfying $t_0<T_i<\infty$, and a path of play from $t_0$ to $T_i-1$ such that in the period $T_i$ player $j$ uses a (history-dependent) threshold strategy $m_j(T_i)$ that satisfies $-m_j(T_i)>m_i(T_i)$, such that following the action $z_j(T_i)=+1$ the agreement is reached, in which case players choose a constant action forever after.

The claim for $m_i(t_0)>\sigma_i(S_j(t_0))$ is completely symmetric.
\label{cl1}
\end{claim}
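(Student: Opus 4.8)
The plan is to reduce the claim to the purely existential task of exhibiting a positive-probability continuation in which a set of \emph{negative-sum} types $(s_i,s_j)$, i.e.\ $s_i+s_j<0$, reaches a mutual $+1$ at some finite date. The required threshold inequality then comes for free: if realized types with $s_i+s_j<0$ both play $+1$ at $T_i$, then playing $+1$ under a threshold strategy means exceeding the threshold, so $m_i(T_i)<s_i$ and $m_j(T_i)<s_j$, whence $m_i(T_i)+m_j(T_i)<s_i+s_j<0$, which is exactly $-m_j(T_i)>m_i(T_i)$. Moreover, under symmetry the zero set of $E$ is exactly the antidiagonal: with $g_a=g_b$ we have $E(s,s')=E(s',s)$, and the negation symmetry of conditions (i)--(ii) of Section \ref{section:sym} gives $E(-s,s)=-E(s,-s)$, so $E(s,-s)=0$ for every $s$; since $E$ is strictly increasing in each argument this forces $E(s_i,s_j)<0\iff s_i+s_j<0$. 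Hence the agreeing negative-sum types are choosing $+1$ while their complete-information action is $-1$, which is precisely the ``wrong action upon agreement'' that Proposition \ref{aggreg} requires.

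First I would record the content of the hypothesis: $m_i(t_0)<\sigma_i(S_j(t_0))$ means, by (E.1) and the definition of $\sigma_i$, that $E(m_i(t_0),S_j(t_0))<0$, and by continuity the same strict inequality holds for a right-neighbourhood of types $s_i\in[m_i(t_0),\sigma_i(S_j(t_0)))$. Every such type plays $+1$ at $t_0$ yet assigns positive conditional probability to $\{s_j\in S_j(t_0):s_i+s_j<0\}$ (indeed enough mass to make the posterior negative). This already supplies a positive-probability set of negative-sum pairs that play $(+1,\cdot)$ at $t_0$; the remaining work is to route a positive-measure subset of them to a $+1$-agreement at a date strictly after $t_0$.

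Next I would build the path around the economic force that a non-myopic (low) threshold makes $i$ \emph{sticky} toward $+1$: she keeps playing $+1$ even with a negative posterior mean. Concretely, I would follow the history on which $i$ plays $+1$ throughout while $j$ plays $-1$ (so $s_j<m_j(t_0)$, giving disagreement at $t_0$ and $T_i>t_0$) until the first date $T_i$ at which $j$ switches to $+1$. As $i$'s repeated $+1$ left-truncates $j$'s belief set about $i$ (Observation \ref{obs:thresh}), $j$'s posterior rises and, for moderately high but still negative-sum values of $s_j$, $j$ is eventually drawn to $+1$, producing the agreement. To guarantee $T_i<\infty$ I would reuse the shrinkage argument behind Observation \ref{obs:myopic}(iii): along a disagreement path the relevant belief interval strictly truncates and its nested intersection is the (a.s.\ non-realized) singleton true type, so perpetual disagreement has probability zero. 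Once $+1$ is commonly played the belief sets are left-truncated and cannot re-expand on path, so the agreed action is absorbing as in Observation \ref{obs:myopic}(i), giving the constant continuation $+1$.

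The hard part is this routing step. For an arbitrary (possibly non-myopic) threshold profile I must show that a positive-measure subset of the negative-sum pairs actually lands on a $+1$-agreement rather than on a $-1$-agreement or on perpetual disagreement. This is a race between $i$'s threshold drifting up (as each $-1$ of $j$ right-truncates $i$'s belief about $j$ and raises $\sigma_i$) and $j$'s threshold drifting down (as each $+1$ of $i$ left-truncates $j$'s belief about $i$ and lowers $\sigma_j$), and $i$'s initial non-myopic bias is exactly what lets her stickiness win for some types. I would control the race through the monotone comparative statics of $\sigma_i,\sigma_j$ under left/right truncation recorded after (E.1), tracking the corner region $s_i\downarrow m_i(t_0)$ paired with $s_j\uparrow -m_i(t_0)$ (so that $s_i+s_j\uparrow 0^-$ while $s_j$ is kept as large as possible, maximizing the chance that $j$ reciprocates $+1$ before $i$ abandons it). Showing that this corner region retains positive measure all the way to the agreement date is the crux of the argument; the mirror-image case $m_i(t_0)>\sigma_i(S_j(t_0))$ is handled by exchanging the roles of $+1$ and $-1$.
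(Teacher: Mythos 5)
Your reduction of the claim to ``some negative-sum pair reaches a mutual $+1$ at a finite date'' is a sensible reformulation, and your derivation of $m_i(T_i)+m_j(T_i)<0$ from it is correct. But the step you yourself flag as the crux --- routing a positive-measure set of such pairs to a $+1$-agreement in finite time --- \emph{is} the content of the claim, and the route you sketch cannot close it. You propose to control the ``race'' via the comparative statics of $\sigma_i,\sigma_j$ under truncation and to reuse Observations \ref{obs:myopic}(i) and \ref{obs:myopic}(iii). Those observations are proved for \emph{myopic} play, where the acting threshold equals $\sigma_i(S_j(t))$ and is interior, so that every round of disagreement strictly truncates a belief set. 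Here the profile is an arbitrary non-myopic threshold profile: $m_j(t)$ is not tied to $\sigma_j(S_i(t))$, player $j$'s threshold need never lie in the interior of $S_j(t)$, beliefs need never shrink, and nothing in the belief dynamics alone rules out perpetual disagreement or forces $j$ ever to reciprocate $+1$. For the same reason, a first common action need not be absorbing under arbitrary thresholds, so the ``constant action forever after'' part also does not follow from Observation \ref{obs:myopic}(i).

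The missing ingredient is sequential rationality combined with discounting, which is what the paper's proof runs on. Because $z_i(t_0)=+1$ is strictly myopically suboptimal for types in $[m_i(t_0),\sigma_i(S_j(t_0)))$, it can be optimal only if $j$ is expected to reveal payoff-relevant information later, i.e.\ to use interior thresholds along some continuation (the paper's Observation 3 inside the Claim's proof). The paper collects these into a monotone sequence $q_b$ bounded above by $-m_i(t_0)$ (every $s_j\geq -m_i(t_0)$ has a dominant action $+1$ after seeing $z_i(t_0)=+1$); if agreement never arrived, $q_b$ would be infinite and convergent, the value of the residual information to the type $s_i=m_i(t_0)$ would vanish, while her per-period myopic loss from playing against her posterior is bounded away from zero and eventually increasing, contradicting optimality at some finite date for any $\delta_i<1$. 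That incentive contradiction, not a belief-shrinkage argument, is what delivers $T_i<\infty$, the largest element of $q_b$, and the absorbing $+1$-agreement. Your proposal never invokes the optimality of the players' choices after $t_0$, so as written it cannot establish finiteness of $T_i$ or the persistence of the agreed action.
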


\noindent \textbf{Proof of Claim \ref{cl1}}. For concreteness, consider the threshold strategy $m_a(t_0)<\sigma_a(S_b(t_0))$.

\noindent \textit{Observation 1}: After player $b$ observes $z_a(t_0)=+1$, for all $s_b\geq -m_a(t_0)$ the action $+1$ is dominant.

\noindent \textit{Observation 2}: $\sup S_b(t_0+1)>-m_a(t_0)$.

To see why Observation 2 holds, note first that there must exist $s_b=-m_a(t_0)\in S_b(t_0)$, as otherwise either $\sup S_b(t_0)<-m_a(t_0)$ or $\inf S_b(t_0)>-m_a(t_0)$ in which cases $s_a=m_a(t_0)$ has a dominant action and is never indifferent between the two actions. Further, since $m_a(t_0)=+1$ is non-myopic, we cannot have $|m_b(t_0)|=m_a(t_0)$ since otherwise $s_a=m_a(t_0)$ learns all payoff-relevant information at the end of $t_0$ independent of her own action in $t_0$, which means that she never chooses non-myopic action $+1$. But then, \textit{if} there exists $m_b(t_0) \in \text{Int}\ S_b(t_0)$, it must be that either $m_b(t_0)<-m_a(t_0)$ in which case following $z_b(t_0)=+1$, $\sup S_b(t_0+1)>-m_a(t_0)$, or $m_b(t_0)>-m_a(t_0)$ in which case following $z_b(t_0)=-1$, $\sup S_b(t_0+1)>-m_a(t_0)$. If there is no $m_b(t_0)\in \text{Int}\ S_b(t_0)$, then since $-m_a(t_0)\in S_b(t_0)$, it must be true that $\sup S_b(t_0)=S_b(t_0+1)>-m_a(t_0)$. This proves Observation 2. 
 
\noindent \textit{Observation 3}: Since $z_a(t_0)=+1$ is non-myopic for all types $s_a\in[m_a(t_0),\sigma_a(S_b(t_0)))$, there must exist a future time period $t_1>t_0$, and a play from $t_0$ to $t_1-1$, such that in the period $t_1$ player $b$ uses a (history dependent) threshold $m_b(t_1)\in\text{Int}\ S_b(t_1)$. If no such play exists (i.e. if $s_a$ does not expect any payoff-relevant information from $b$ in any future period), then $a$ will never make a non-myopic choice in $t_0$.

From Observations 1-3 it follows that (1) there must exist a period $t_1>t_0$ with a strictly interior threshold $m_b(t_1)\in\text{Int}\ S_b(t_1)$ that satisfies $-m_b(t_1)>m_a(t_0)$ and that (2) following $z_a(t_0)=+1$, player $b$ never uses thresholds within the set $[-m_a(t_0),\sup S_b (t_0+1)]$ (where $-m_a(t_0)<\sup S_b(t_0+1)$).

For the following assume $m_a(t_0)>0$ (the proof for $m_a(t_0)<0$ is entirely parallel). Define a sequence of (history dependent) thresholds for player $b$, $ q_b:=(m_b(t_1),m_b(t_2),..)$, such that $t<t_1<t_2<..$ and $m_b(t_1)<m_b(t_2)<..<m_b(t_j)<..$, with the understanding that if both $m_b(t_k)\in q_b$ and $m_b(t_{k+1})\in q_b$, then starting from the period $t_k$ there exists a play up to the period $t_{k+1}-1$ resulting in the strategy $m_b(t_{k+1})$ in the period $t_{k+1}$, with $m_b(t_k)<m_b(t_{k+1})$. By Observation 2, note that $q_b$ has at least one element (and so $q_b$ is non-empty), and by Observation 1 note that none of the elements of $q_b$ exceeds $-m_a(t_0)$.



\noindent \textit{Observation 4}. If for the type $s_a=m_a(t_0)$ the agreement is reached at the end of a finite period $T_a$, then there must exist the largest element in the sequence $q_b$, some $m_b(T_a)$, such that upon $z_b(T_a)=+1$ the players agree on the same action forever after.

\noindent \textit{Observation 5}. Independent of whether $q_b$ is finite or infinite, there must exist a play from $t_0$ to $T'>t_0$, with $m_b(T')\in q_b$, such that where following $z_b(T')=+1$, $\sigma_a(S_b(T'+1))<m_a(t_0)$.

Observation $4$ holds directly by construction. To see why Observation $5$ must hold, suppose to the contrary that the statement is not true. This would imply that there is no element $m_b(t')\in q_b$ such that following $z_b(t')=+1$, $\sigma_a(S_b(t'+1))<m_a(t_0)$. Since after observing $z_b(t')=-1$ the type $s_a'=m_a(t_0)$ has a dominant action $-1$, this would imply that after the period $t_0$ with $z_a(t_0)=+1$, no action of $b$ ever leads to the action $+1$ being myopically strictly optimal for the type $s_a'$. Now, consider an infinite sequence $\{z_a(t')=-1\}_{t'\geq t_0}$. Since there is no play following $z_a(t_0)=+1$ where $+1$ is myopically optimal for the type $s_a'$, the sequence $\{z_a(t')=-1\}_{t'\geq t_0}$ must dominate any other (on path) sequence of actions for $s_a'$. This means that $z_a(t_0)=+1$ cannot be optimal for $s_a'$, a contradiction.

Now, equipped with the above Observations, suppose that contrary to the Claim a period $T_a<\infty$ does not exist. That is, the sequence $q_b$ must have infinitely many elements. Because the sequence $q_b$ is monotonically increasing, and by Observation 1 it is bounded above by $-m_a(t_0)$, by Monotone Convergence Theorem there must exist a number $r\leq -m_a(t_0)$ to which the sequence $q_b$ converges.

Next, consider any $m_b(t')\in q_b$. If $q_b$ is infinite, there must exist a play and a future time period $t''>t'$ and a threshold $m_b(t'')\in q_b$ with the property $m_b(t')<m_b(t'')$. 

\noindent\textit{Observation 6}: Following $z_b(t')=+1$ there must exist a play and a period $\hat{t}>t'$ where $a$ uses a threshold $m_a(\hat{t})$ with the property $m_a(\hat{t})<-m_b(t')$.

Suppose not. Since any future threshold of $a$ must be strictly below $-m_b(t')$ because after the period $t'$ all $s_a\geq -m_b(t')$ have the dominant action $+1$, this would imply that following the move $z_b(t')=+1$, none of the types $s_b\in[m_b(t'),\sup S_b(t')]$ will ever receive an additional information from the player $a$: i.e., $S_a(t')=S_a(t'+1)=..$. But then, either $b$'s action $+1$ at the threshold $m_b(t')$ or one of the actions at the threshold $m_b(t'')$ is suboptimal.

Recall that by Observation 4, there exists a period $T'> t_0$ and a threshold $m_b(T')\in q_b$, such that following $z_b(T')=+1$, $\sigma_a(S_b(T'+1))<m_a(t_0)$. 
Thus, by Observation 6 the necessary condition for $q_b$ to be infinite, is that there is a course of play (on path) where the type $s_a'=m_a(t_0)$ chooses the non-myopic action $-1$ infinitely many times. But as $\lim_{t\rightarrow\infty}q_b=r$, the value from receiving additional information from $b$ for the type $s_a'$ goes to 0: this is because $b$'s type set for which $a$ switch the myopically optimal action shrinks, and goes to 0 in length. At the same time, after the period $T'$ (Observation 5) the myopic losses of $s_a'$ from choosing the non-myopic action $-1$ increase in absolute value. This is the consequence of Observation 5: 
for any two elements $m_b(t')\in q_b,m_b(t'') \in q_b$, with $T'<t'<t''$, it must be true that following $z_b(t')=+1$, $\sigma_a(S_b(t'+1))<m_a(t_0)$, and following $z_b(t'')=+1$, $\sigma_a(S_b(t''+1))<\sigma_a(S_b(t'+1))$. 

But then, if there is no largest element in $q_b$, for any $\delta_a<1$ there exists a course of play and a time period $\tilde{t}$ such that for all $t>\tilde{t}$ the type $s_a'$ chooses the non-myopic action $-1$ although the discounted expected value of additional information for the type $s_a'$ is strictly smaller than the current-period myopic loss. A contradiction. As a result, $q_b$ must be finite and therefore must have the largest element. Since all types $s_b\geq -m_a(t_0)$ have the strictly dominant action $+1$ upon observing $z_a(t_0)=+1$, it must be that $\max(q_b)<-m_a(t_0)$.
This proves the claim.

\medskip

Finally, we consider the following two observations. First, under agreement in a period $T$ the actions of both players starting at $T$ must be myopically optimal. Thus, using Claim \ref{cl1}, the action $m_b(T)=+1$ must be myopically optimal. Note that for $m_b(T)$ to result in agreement, it must be the case that even if $a$ uses an interior threshold $m_a(T)\in\text{Int}\ S_a(T)$ in the same period $T$, it must be that $m_a(T)>-m_b(T)$ as otherwise, upon $z_a(T)=-1$, $\sigma_b(S_a(T+1))\in \text{Int}\ S_b(T+1)$ which contradicts the notion of an agreement. Therefore, it must be the case that following $z_b(T)=+1$ both players choose the myopically optimal action $+1$ for all $t>T$. That is, both $\sigma_a(S_b(T+1))<m_a(t_0)$ and $\sigma_b(S_a(T+1))\leq m_b(T)$ must hold (the first inequality is strict by the argument in Claim \ref{cl1}). 

 Note, however, that for all $(s_a,s_b)\in (m_a(t_0),-m_b(T))\times (m_b(T),-m_a(t_0))$ we have $E(s_a,s_b)<0$. That is, since $m_a(t_0)< -m_b(T)$ there exists sets (of positive measure) of types for each player on which the players are taking a different action to what they would have chosen under complete information.

Second, note that as we assume play in non-myopic threshold strategies, there must exist at least one history, following which $i$ uses a strategy $m_i(t)<M_i(t)$. This proves the Proposition.


\qed

\end{document}